\documentclass[aps,prd,preprint,superscriptaddress,longbibliography,nofootinbib,floatfix]{revtex4-2}
\usepackage{CJKutf8}
\usepackage{booktabs}
\usepackage{graphicx}
\usepackage{amsmath,amssymb,amsthm}
\usepackage{mathrsfs}
\usepackage[hidelinks]{hyperref}
\usepackage{orcidlink}
\numberwithin{equation}{section}

\newtheorem{theorem}{Theorem}[section]
\newtheorem{proposition}[theorem]{Proposition}
\newtheorem{lemma}[theorem]{Lemma}
\newtheorem{corollary}[theorem]{Corollary}

\theoremstyle{definition}

\theoremstyle{remark}

\begin{document}

\title{Convergence Without Finite Exactness in the Eigenstate Quantum Bootstrap}

\author{Ye Zhou (\begin{CJK*}{UTF8}{gbsn}周烨\end{CJK*})\orcidlink{0009-0004-7050-7736}}
\email{ye.zhou.horizon@gmail.com}
\affiliation{Independent Researcher, Kunshan, Jiangsu, China}
\author{Daniel Keren}
\email{dkeren@ds.haifa.ac.il}
\affiliation{Department of Computer Science, University of Haifa, Haifa, Israel}

\begin{abstract}
The eigenstate quantum-mechanical bootstrap can converge to the physical moment set while remaining nonexact at every finite order. We prove this for a one-dimensional quasi-exactly solvable sextic oscillator at explicitly known eigenvalues, including excited levels. At every finite order, the lower and upper bounds on $u^2$ remain strictly below and above its physical value, even after adjoining arbitrary finite polynomial Gram tests and all eigenstate-annihilator relations visible in the retained moment space. Hence neither side of the centered observable admits a finite sum-of-squares certificate modulo the eigenstate annihilator. The gap persists under any finite set of additional localizing constraints whose weights are strictly positive on polynomial vectors.  We then prove convergence for polynomial Schr\"odinger operators satisfying explicit sum-of-squares confinement and derivative bounds. The eigenstate relations give moment estimates uniform in the truncation order; these yield a regular Schr\"odinger representation supported on the prescribed eigenspace. Finally, a four-mode extension combines this fixed-observable obstruction with feasible position moments having no positive representing measure at every finite order. At level three, this construction yields nonphysical feasible points with strictly positive definite moment matrices.
\end{abstract}

\maketitle

\section{Introduction}\label{Sec:Intro}

Moment methods constrain quantum spectra through positivity and operator identities \cite{Handy1985,Handy1987}. The quantum-mechanical bootstrap implements these conditions as semidefinite problems for finitely many expectation values \cite{Han2020,Berenstein2021,Bhattacharya2021,Berenstein2022}. Energy-conservation formulations give bounds at prescribed energy \cite{Berenstein2023,Morita2023}; related hierarchies bound stationary observables in open bosonic systems \cite{RobichonTilloy2026}, while spectral crossing equations constrain gaps and matrix elements \cite{NancarrowXin2023}. These methods make it possible to obtain useful bounds before reconstructing a wavefunction.

In the Kronig--Penney model, Blacker, Bhattacharyya and Banerjee found that position-based Toeplitz constraints reproduce band gaps but leave the minimum energy poorly constrained; including higher momentum powers improves the latter bounds \cite{Blacker2022}. In earlier work, one of the authors \cite{Keren2026} exhibited a certified nonphysical feasible point for the stationary bootstrap and investigated the stronger eigenstate constraints. A companion paper gives spurious level-3 eigenstate-bootstrap moment matrices that are strictly positive definite at the first excited and ground-state energies of a coupled three-dimensional double well \cite{KerenZhou2026Interior}. These examples motivate a distinction between the existence of spurious finite moment data, the exactness of a bound on a fixed observable, and convergence as the truncation order grows.

Finite and asymptotic convergence are already distinct in commutative moment--sum-of-squares (SOS) optimization \cite{Lasserre2001,Parrilo2003,Nie2014}. For canonical quantum variables there is a further issue: position and momentum are unbounded. Archimedean noncommutative optimization \cite{Pironio2010} and positivity theorems for free polynomials \cite{Helton2002} do not by themselves identify a regular representation of the canonical commutation relations. Bosonic reconstruction requires control of moment growth \cite{Navascues2013}. We derive that control from the eigenstate equations under explicit polynomial hypotheses.

Our finite-order result already occurs in one dimension. For a sextic quasi-exactly solvable oscillator, the physical expectation of $u^2$ is known at three simple eigenvalues. At each of these energies, every finite relaxation built from polynomial Gram tests and visible eigenstate-annihilator relations admits values on both sides of the physical value. Neither sign of the centered observable therefore has a finite polynomial-square certificate modulo that annihilator. The same separation persists for finite localizing tests whose weights are strictly positive on polynomial vectors. The proof uses a one-parameter deformation of the moments of a quartic weight: all polynomial eigenstate identities remain valid, and positivity at any fixed finite degree survives near the physical parameter.

We also prove that the hierarchy converges to the physical energy shell for a class of confining polynomial Schr\"odinger operators. Uniform moment estimates give compactness and analytic vectors in the limiting Gelfand--Naimark--Segal (GNS) representation. Integration of the canonical relations and the Stone--von Neumann theorem then produce a density operator on the prescribed eigenspace. The sextic oscillator satisfies the hypotheses, so its finite bounds are always strict and converge to the physical value. A four-mode extension supplies a complementary example in which the position moments themselves have no positive representing measure at any finite order.

Section~\ref{Sec:OneD} proves the one-dimensional obstruction and its persistence under finite strengthenings. Section~\ref{Sec:Convergence} establishes reconstruction and convergence. Section~\ref{Sec:Four} combines the two finite-order constructions on an explicitly determined excited energy shell. The remaining questions concern weaker convergence hypotheses and quantitative error estimates.

\section{Finite nonexactness in one dimension}\label{Sec:OneD}

\subsection{The eigenstate hierarchy}\label{Sec:Hierarchy}

Let $\mathcal A_d$ be the polynomial Weyl $*$-algebra generated by $d$ canonical pairs, with
\[
[q_j,p_k]=\mathrm i\delta_{jk},\qquad [q_j,q_k]=[p_j,p_k]=0,\qquad q_j^\dagger=q_j,\quad p_j^\dagger=p_j.
\]
We use Poincar\'e--Birkhoff--Witt (PBW) order, with all positions to the left of all momenta, and write $\mathcal A_{d,\le n}$ for the polynomials of total degree at most $n$. Consider
\begin{equation}\label{eq:general-H}
H_V=\frac12\sum_{j=1}^d p_j^2+V(q),\qquad \deg V=2r,\quad r\ge1.
\end{equation}
For $K\ge r$, let $\mathcal F_K^V(E)$ consist of complex linear functionals $L:\mathcal A_{d,\le2K}\to\mathbb C$ satisfying
\begin{align}
L(1)&=1,\qquad L(A^\dagger)=\overline{L(A)},\nonumber\\
M_K(L)&:=\bigl(L(v_a^\dagger v_b)\bigr)_{a,b}\succeq0,\nonumber\\
L([H_V,O])&=0
\quad\text{for every }O\in\mathcal A_d\text{ with }[H_V,O]\in\mathcal A_{d,\le2K},\label{eq:finite-hierarchy}\\
L(O(H_V-E))&=0
\quad\text{for every }O\in\mathcal A_{d,\le2K-2r}.\nonumber
\end{align}
Here $\{v_a\}$ is the PBW monomial basis through degree $K$. The condition $M_K(L)\succeq0$ means $L(A^\dagger A)\ge0$ for $\deg A\le K$. The commutator condition includes every polynomial $O$ whose reduced commutator is retained, including combinations with cancellation of higher-degree terms. Hermiticity also gives the corresponding left energy relations.

A physical functional is $L(A)=\operatorname{Tr}(\rho A)$, where $\rho$ is a density operator supported on $\ker(H_V-E)$ and has the required moments. Its restriction is feasible at every level. We ask when the level-$K$ feasible set \emph{strictly} contains these physical moments.

\subsection{The sextic oscillator}

Set $q=u$, $p=p_u$ and
\begin{equation}\label{eq:oneD-H}
h_u=\frac12(p_u^2+u^6-11u^2).
\end{equation}
Its smooth potential is bounded below, so the operator is essentially self-adjoint on $C_c^\infty(\mathbb R)$ \cite{Simon1973}. Since the potential tends to $+\infty$, its resolvent is compact. Define
\begin{equation}\label{eq:phi-eta}
\phi(u)=C e^{-u^4/4},\qquad
\mu(f)=\int_{\mathbb R}f(u)|\phi(u)|^2\,du,\qquad
\eta=\mu(u^2)=\sqrt2\,\frac{\Gamma(3/4)}{\Gamma(1/4)},
\end{equation}
with $\|\phi\|_2=1$. Integration by parts gives
\begin{equation}\label{eq:moment-recurrence}
\mu(u^{n+4})=\frac{n+1}{2}\mu(u^n),\qquad n\ge0,
\end{equation}
and all odd moments vanish. Conjugating by $\phi$ yields
\begin{equation}\label{eq:gauge-rotation}
h_u(P\phi)=(\mathscr DP)\phi,\qquad
\mathscr D=-\frac12\partial_u^2+u^3\partial_u-4u^2.
\end{equation}
The invariant space $\operatorname{span}\{1,u^2,u^4\}$ gives the standard quasi-exactly solvable sector \cite{TurbinerUshveridze1987,Turbiner1988,Turbiner2016}:
\begin{align}
P_-(u)&=u^4+2u^2+\tfrac12,&\mathscr DP_-&=-4P_-,\nonumber\\
P_0(u)&=u^4-\tfrac32,&\mathscr DP_0&=0,\label{eq:qes-states}\\
P_+(u)&=u^4-2u^2+\tfrac12,&\mathscr DP_+&=4P_+.\nonumber
\end{align}
For $e\in\{-4,0,4\}$, write $P_e=P_-,P_0,P_+$ respectively, and
\[
\psi_e=\frac{P_e\phi}{\sqrt{\mu(P_e^2)}}.
\]
The polynomials have zero, two and four real zeros. Whole-line Sturm oscillation therefore identifies their energies as $\lambda_0=-4$, $\lambda_2=0$ and $\lambda_4=4$, where $\lambda_0<\lambda_1<\cdots$ is the spectrum of $h_u$. The nodal-count argument applies to this confining potential: restrictions of an eigenfunction to its nodal intervals lie in the quadratic-form domain, so Sturm comparison and the min--max principle give its spectral index; compare the proof in Ref.~\cite[Sec.~9.7]{Teschl2014}. In particular, the latter two states are excited. The spectrum is simple, so the only density operator at energy $e$ is $|\psi_e\rangle\langle\psi_e|$.

The recurrence gives
\[
\mu(P_-^2)=4+8\eta,\qquad \mu(P_0^2)=2,\qquad
\mu(P_+^2)=4-8\eta,
\]
and hence $0<\eta<1/2$. The physical values of $u^2$ are
\begin{equation}\label{eq:physical-a}
a_-=\frac{13\eta+6}{4+8\eta},\qquad
a_0=\frac32\eta,\qquad
a_+=\frac{13\eta-6}{4-8\eta}.
\end{equation}
We use $a_e=a_-,a_0,a_+$ when the energy is indexed by $e=-4,0,4$.

\subsection{A formal moment deformation}

We keep the recurrence \eqref{eq:moment-recurrence} but replace the initial value $\mu(u^2)=\eta$ by a real parameter $t$. More precisely, define $\mu_t$ by
\begin{equation}\label{eq:formal-moments}
\mu_t(1)=1,\quad \mu_t(u^2)=t,\quad
\mu_t(u^{2j+1})=0,\quad
\mu_t(u^{n+4})=\frac{n+1}{2}\mu_t(u^n).
\end{equation}
All its moments are affine in $t$, and $\mu_\eta=\mu$. For every polynomial $f$,
\begin{equation}\label{eq:formal-ibp}
\mu_t(f')=2\mu_t(u^3f).
\end{equation}
Extend $\mu_t$ complex-linearly and put $\langle f,g\rangle_t=\mu_t(\overline f g)$. Multiplication by $u$ and the gauged momentum
\[
\widetilde p_u=-\mathrm i(\partial_u-u^3)
\]
are formally symmetric for this pairing. Define $\widetilde A$ by $A(f\phi)=(\widetilde A f)\phi$. By composition,
\begin{equation}\label{eq:formal-adjoint}
\langle f,\widetilde A g\rangle_t
=\langle\widetilde{A^\dagger}f,g\rangle_t
\qquad(A\in\mathcal A_1).
\end{equation}
This follows algebraically from \eqref{eq:formal-ibp} and holds for every real $t$.

For $t$ near $\eta$, define a functional on the full Weyl algebra by
\begin{equation}\label{eq:oneD-functional}
\omega_{e,t}(A)=\frac{\mu_t(P_e\widetilde A P_e)}{\mu_t(P_e^2)}.
\end{equation}
The denominators are $4+8t$, $2$ and $4-8t$, so all are positive in a neighborhood of $\eta$. Equations~\eqref{eq:qes-states} and \eqref{eq:formal-adjoint} imply Hermiticity and
\begin{equation}\label{eq:oneD-identities}
\omega_{e,t}(A(h_u-e))
=\omega_{e,t}((h_u-e)A)
=\omega_{e,t}([h_u,A])=0
\qquad(A\in\mathcal A_1).
\end{equation}
They also give
\begin{equation}\label{eq:oneD-square}
\omega_{e,t}(A^\dagger A)
=\frac{\mu_t(|\widetilde A P_e|^2)}{\mu_t(P_e^2)}.
\end{equation}
If $\deg A\le K$, then $\deg(\widetilde A P_e)\le3K+4$: each position factor raises degree by one and each gauged momentum factor by at most three. Define
\begin{equation}\label{eq:JDt}
J_D(t)=\bigl(\mu_t(u^{a+b})\bigr)_{a,b=0}^D.
\end{equation}
The strictly positive density $|\phi|^2$ makes $J_D(\eta)$ positive definite for every finite $D$ \cite{Gautschi2004,Schmudgen2017}. By continuity,
\begin{equation}\label{eq:oneD-Hankel}
J_{3K+4}(t)\succ0
\quad\Longrightarrow\quad M_K(\omega_{e,t})\succeq0
\end{equation}
and, since $J_{3K+4}(t)$ is affine in $t$, its hypothesis holds throughout a neighborhood of $\eta$ that may depend on $K$.

The observable changes along the deformation:
\begin{align}
a_-(t):=\omega_{-4,t}(u^2)&=\frac{13t+6}{4+8t},&
a_0(t):=\omega_{0,t}(u^2)&=\frac32t,&
a_+(t):=\omega_{4,t}(u^2)&=\frac{13t-6}{4-8t}.\label{eq:a-deformed}
\end{align}
Their derivatives are $4/(4+8t)^2$, $3/2$ and $4/(4-8t)^2$, respectively. All are positive near $\eta$.

\subsection{Finite constraints and finite certificates}

Let
\begin{equation}
\mathcal J_e=\{A\in\mathcal A_1:A\psi_e=0\}
\end{equation}
be the annihilator of the eigenstate, a left ideal of $\mathcal{A}_1$ strictly
containing $\mathcal{A}_1(h_u-e)$. A finite relaxation may additionally impose
$L(A)=0$ for every $A\in\mathcal J_e+\mathcal J_e^\dagger$ lying in its retained space,
including elements obtainable only as combinations of higher-degree elements
whose top-degree terms cancel.

\begin{theorem}[Persistence under finite polynomial tests]\label{thm:oneD-persistence}
Fix $e\in\{-4,0,4\}$. Let $\mathcal W\subset\mathcal A_1$ be a finite-dimensional $*$-closed linear space containing $1$ and $u^2$, and let $G_1,\ldots,G_m$, $m\ge1$, be Weyl polynomials with $G_i^\dagger G_j\in\mathcal W$. There are normalized Hermitian functionals $L^\pm:\mathcal W\to\mathbb C$ such that
\[
\bigl(L^\pm(G_i^\dagger G_j)\bigr)_{i,j}\succeq0,\qquad
L^\pm\bigl(\mathcal W\cap(\mathcal J_e+\mathcal J_e^\dagger)\bigr)=0,
\]
and
\[
L^-(u^2)<a_e<L^+(u^2).
\]
In particular, if $\mathcal F_K^{h}(e)$ denotes \eqref{eq:finite-hierarchy} for $h_u$, then for every $K\ge3$,
\begin{equation}\label{eq:oneD-strict}
\inf_{L\in\mathcal F_K^{h}(e)}L(u^2)<a_e<
\sup_{L\in\mathcal F_K^{h}(e)}L(u^2).
\end{equation}
\end{theorem}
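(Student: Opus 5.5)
The plan is to take $L^\pm$ to be restrictions to $\mathcal W$ of the deformed functionals $\omega_{e,t}$ from \eqref{eq:oneD-functional}, with $t$ slightly above or slightly below $\eta$. Since $\mathcal W$ is finite dimensional, there is some $K_0$ with $\mathcal W\subset\mathcal A_{1,\le 2K_0}$ and every $G_i$ of degree at most $K_0$. Normalization and Hermiticity of $\omega_{e,t}$ follow from \eqref{eq:formal-adjoint}, and $\omega_{e,t}$ is defined for all $t$ near $\eta$ because the denominators $\mu_t(P_e^2)$ stay positive there.

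For the Gram condition, write $v=\sum_i c_iG_i$. By \eqref{eq:oneD-square},
\[
\sum_{i,j}\overline{c_i}c_j\,\omega_{e,t}(G_i^\dagger G_j)=\frac{\mu_t(|\widetilde v P_e|^2)}{\mu_t(P_e^2)}.
\]
The numerator is a Hankel form in $J_{3K_0+4}(t)$, so it is nonnegative whenever $J_{3K_0+4}(t)\succ0$. This holds on an interval $(\eta-\delta,\eta+\delta)$, as noted after \eqref{eq:oneD-Hankel}. The observable inequality then follows from \eqref{eq:a-deformed}: $a_e(t)$ is strictly increasing near $\eta$, so choosing $t_\pm=\eta\pm\delta/2$ gives $L^-(u^2)<a_e<L^+(u^2)$.

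The main obstacle is the annihilator condition, namely that $\omega_{e,t}$ vanishes on $\mathcal J_e+\mathcal J_e^\dagger$ for $t\ne\eta$. It is enough to treat $\mathcal J_e$, since Hermiticity handles $\mathcal J_e^\dagger$. I would first identify $\mathcal J_e$ concretely: $A\psi_e=0$ means $\widetilde A P_e=0$ as a function. Now $\widetilde A$ is a differential operator with polynomial coefficients acting on the polynomial $P_e$, so $\widetilde A P_e$ is a polynomial, and it vanishes identically as a polynomial. It follows that
\[
\omega_{e,t}(A)=\frac{\mu_t(P_e\,\widetilde A P_e)}{\mu_t(P_e^2)}=0 \quad\text{for every } t.
\]
One must check that the gauge map $A\mapsto\widetilde A$ is a well-defined algebra homomorphism into polynomial differential operators, i.e.\ that $\widetilde{AB}=\widetilde A\widetilde B$ and $\widetilde{p_u}=-\mathrm i(\partial_u-u^3)$. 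With this, $\mathcal J_e=\{A:\widetilde A P_e=0\}$ is a purely algebraic condition. This also covers elements of $\mathcal W$ obtained through top-degree cancellation, because the vanishing holds on all of $\mathcal J_e$, not just on generators. Elements of $\mathcal J_e^\dagger$ give $\omega_{e,t}(A^\dagger)=\overline{\omega_{e,t}(A)}=0$ by Hermiticity, and linearity extends the vanishing to the sum.

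For the hierarchy statement \eqref{eq:oneD-strict}, take $\mathcal W=\mathcal A_{1,\le2K}$ and let $G_i$ be the PBW monomials through degree $K$; here $K\ge3$ ensures $\mathcal F_K^h(e)$ is defined, since $r=3$. The commutator and energy constraints in \eqref{eq:finite-hierarchy} are satisfied by \eqref{eq:oneD-identities}, which holds on the full Weyl algebra. Consequently any retained element, including cancellation combinations $[h_u,O]$ with $O$ of high degree, is annihilated, because $\omega_{e,t}$ is defined on all of $\mathcal A_1$. Thus $L^\pm\in\mathcal F_K^h(e)$, and the strict inequalities between the infimum, $a_e$ and the supremum follow.
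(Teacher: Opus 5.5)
Your proposal is correct and follows the paper's proof essentially step for step. You restrict $\omega_{e,t^\pm}$ to $\mathcal W$, get Gram positivity from \eqref{eq:oneD-square} and positive definiteness of the Hankel matrix $J_{3D+4}(t)$ near $\eta$, use the $t$-independent polynomial identity $\widetilde A P_e=0$ plus Hermiticity for the annihilator condition, use the monotonicity in \eqref{eq:a-deformed} for the strict inequalities, and use \eqref{eq:oneD-identities} for the hierarchy constraints; your extra details (the choice of $K_0$, the explicit $t_\pm$, and that conjugation by $\phi$ is automatically an algebra homomorphism) are consistent with the paper's argument and slightly more explicit.
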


\begin{proof}
Set $D=\max_i\deg G_i$. Equations~\eqref{eq:oneD-square}--\eqref{eq:oneD-Hankel} give the required Gram positivity for $\omega_{e,t}$ when $t$ is sufficiently close to $\eta$.
If $A\in\mathcal J_e$, then $A(P_e\phi)=0$. Since $\phi$ never vanishes, $\widetilde A P_e=0$ is a polynomial identity, independent of $t$. Hence $\omega_{e,t}(A)=0$ for all such $t$, and Hermiticity gives $\omega_{e,t}(A^\dagger)=0$ as well. The full functional therefore vanishes on $\mathcal J_e+\mathcal J_e^\dagger$, including every element of its intersection with $\mathcal W$.
Take $L^\pm=\omega_{e,t^\pm}|_{\mathcal W}$ with $t^-<\eta<t^+$ in the positivity neighborhood. Equation~\eqref{eq:a-deformed} gives the strict inequalities. For the standard hierarchy, the identities \eqref{eq:oneD-identities} supply every retained eigenstate and stationarity relation.
\end{proof}

The bounds in \eqref{eq:oneD-strict} are finite. Indeed,
\[
u^6-12u^2+16=(u^3-2u)^2+4(u^2-2)^2
\]
and the mean-energy relation imply
\begin{equation}\label{eq:oneD-bounded}
0\le L(u^2)\le2e+16\qquad(L\in\mathcal F_K^{h}(e),\ K\ge3).
\end{equation}

Let $\Sigma_{\mathcal A_1}$ denote the cone of finite sums $\sum_j A_j^\dagger A_j$. The theorem has a direct consequence for certificates of the physical value.

\begin{corollary}[No finite certificate of the physical value]\label{cor:oneD-no-certificate}
For each $e\in\{-4,0,4\}$,
\begin{equation}\label{eq:oneD-no-certificate}
u^2-a_e\notin\Sigma_{\mathcal A_1}+\mathcal J_e+\mathcal J_e^\dagger,
\qquad
a_e-u^2\notin\Sigma_{\mathcal A_1}+\mathcal J_e+\mathcal J_e^\dagger.
\end{equation}
\end{corollary}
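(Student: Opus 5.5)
We argue by contradiction and reduce to Theorem~\ref{thm:oneD-persistence}. Suppose, for instance, that
\[
u^2-a_e=\sum_{j=1}^m A_j^\dagger A_j+B+C^\dagger,\qquad B,C\in\mathcal J_e .
\]
This is a finite identity in $\mathcal A_1$, so every term involves only finitely many Weyl polynomials. Let $\mathcal W$ be the finite-dimensional $*$-closed span of $1$, $u^2$, all products $A_i^\dagger A_j$, and the elements $B$, $C$, $B^\dagger$, $C^\dagger$. Put $G_i=A_i$; if the sum is empty, take $G_1=1$ so that $m\ge1$. By construction $G_i^\dagger G_j\in\mathcal W$.

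We then evaluate the identity on functionals. Theorem~\ref{thm:oneD-persistence} gives a normalized Hermitian $L^-$ on $\mathcal W$ with $(L^-(G_i^\dagger G_j))\succeq0$ and $L^-(\mathcal W\cap(\mathcal J_e+\mathcal J_e^\dagger))=0$. The element $B+C^\dagger$ equals $(u^2-a_e)-\sum_j A_j^\dagger A_j$, so it lies in $\mathcal W$. It also lies in $\mathcal J_e+\mathcal J_e^\dagger$, so $L^-$ annihilates it. The diagonal entries of the Gram matrix give $L^-(A_j^\dagger A_j)\ge0$. Applying $L^-$ to the identity and using $L^-(1)=1$ yields $L^-(u^2)-a_e\ge0$. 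This contradicts $L^-(u^2)<a_e$.

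The second inclusion is handled the same way. We apply the identical argument to $a_e-u^2$ and use $L^+$, obtaining $a_e-L^+(u^2)\ge0$, which contradicts $L^+(u^2)>a_e$.

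The only delicate point is ensuring that the annihilator part is actually killed by $L^\pm$. The theorem only guarantees vanishing on the intersection of $\mathcal W$ with $\mathcal J_e+\mathcal J_e^\dagger$, not on arbitrary elements. This is why $B$ and $C^\dagger$ are placed in $\mathcal W$, and why we note that $B+C^\dagger$ itself lies in $\mathcal W$. Alternatively, one can bypass the restriction: the proof of the theorem shows that the full functional $\omega_{e,t}$ vanishes on all of $\mathcal J_e+\mathcal J_e^\dagger$.
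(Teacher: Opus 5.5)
Your proposal is correct and follows the paper's proof: you build a finite $*$-closed $\mathcal W$ containing $1$, $u^2$, the cross products $A_i^\dagger A_j$ and the annihilator terms, apply Theorem~\ref{thm:oneD-persistence} with $G_i=A_i$ (taking $G_1=1$ for an empty sum), and evaluate the proposed identity with $L^-$ or $L^+$ to reach a sign contradiction. Your explicit check that $B+C^\dagger$ lies in $\mathcal W\cap(\mathcal J_e+\mathcal J_e^\dagger)$ is the same point the paper secures by placing every annihilator term in $\mathcal W$.
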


\begin{proof}
A proposed certificate uses finitely many squares and annihilator terms. Let the square factors be $A_1,\ldots,A_m$, setting $A_1=1$ if the sum is empty. Choose a finite-dimensional $*$-closed space $\mathcal W$ containing $1$, $u^2$, every annihilator term and every cross product $A_i^\dagger A_j$. Apply Theorem~\ref{thm:oneD-persistence} with $G_i=A_i$. The appropriate functional $L^-$ or $L^+$ is nonnegative on the right-hand side of the proposed identity and negative on the left-hand side.
\end{proof}

Stationarity is already included in these annihilator relations, since
\begin{equation}\label{eq:stationarity-ideal}
[h_u,O]=(h_u-e)O-O(h_u-e)\in\mathcal J_e^\dagger+\mathcal J_e.
\end{equation}
The obstruction concerns finite polynomial-square certificates. Strict positivity theorems for the Weyl algebra can allow multipliers and different certificate classes \cite{Schmudgen2005}.

\subsection{Finite localizing constraints}\label{Sec:Localizing}

For a known nonnegative observable $g$, a localizing test imposes positivity of a weighted Gram matrix $\bigl(L(G_i^\dagger gG_j)\bigr)_{i,j}$. Say that a Hermitian Weyl polynomial $g$ is \emph{strictly positive on polynomial vectors} if
\begin{equation}\label{eq:strict-weight}
\langle f\phi,gf\phi\rangle_{L^2(\mathbb R)}>0
\quad\text{for every nonzero }f\in\mathbb C[u].
\end{equation}
Examples include every nonzero nonnegative position polynomial $g(u)$, every nonzero nonnegative momentum polynomial $g(p_u)$, and $h_u+c$ for $c>4$. For momentum polynomials this follows from Plancherel's theorem and the fact that a nonzero polynomial has a zero set of measure zero.

\begin{proposition}[Persistence under finite localizing tests]\label{prop:localizing}
Theorem~\ref{thm:oneD-persistence} remains valid after adding finitely many tests
\[
\bigl(L(G_{k,i}^\dagger g_kG_{k,j})\bigr)_{i,j}\succeq0,\qquad k=1,\ldots,s,
\]
provided each $g_k$ satisfies \eqref{eq:strict-weight} and each displayed product lies in $\mathcal W$. Consequently, \eqref{eq:oneD-no-certificate} remains true when $\Sigma_{\mathcal A_1}$ is enlarged by the cones of finite sums $\sum_i A_i^\dagger g_kA_i$ for any fixed finite collection of such weights.
\end{proposition}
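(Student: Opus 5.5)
The plan is to reuse the deformed functionals $\omega_{e,t}$ from the proof of Theorem~\ref{thm:oneD-persistence}. The annihilator relations, Hermiticity and the monotonicity \eqref{eq:a-deformed} of $a_e(t)$ do not depend on the positivity tests, so they carry over unchanged. The only new requirement is that each localizing matrix $\bigl(\omega_{e,t}(G_{k,i}^\dagger g_kG_{k,j})\bigr)_{i,j}$ remain positive semidefinite for $t$ in some interval around $\eta$. There are finitely many tests, so it suffices to treat one weight $g=g_k$ with vectors $G_1,\ldots,G_n$ and then intersect the resulting finitely many neighborhoods of $\eta$.

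First, write the localizing form through the formal pairing. For $c\in\mathbb C^n$ put $A=\sum_j c_jG_j$. Using \eqref{eq:formal-adjoint} and the definition of $\widetilde{\ \cdot\ }$, we get
\[
\sum_{i,j}\bar c_ic_j\,\omega_{e,t}(G_i^\dagger gG_j)=\frac{\mu_t\bigl(\overline{\widetilde AP_e}\,\widetilde g\,\widetilde AP_e\bigr)}{\mu_t(P_e^2)}.
\]
Hence the matrix in question equals $F^\dagger B(t)F/\mu_t(P_e^2)$, where $F$ is the fixed coefficient matrix of the polynomials $f_j=\widetilde G_jP_e$ in the monomial basis up to some degree $D$. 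Here $B(t)=\bigl(\langle u^a,\widetilde g\,u^b\rangle_t\bigr)_{a,b\le D}$ is the Gram matrix of $\widetilde g$ on $\mathbb C[u]_{\le D}$. Since $\widetilde g$ maps polynomials of degree $\le D$ to polynomials of bounded degree, every entry of $B(t)$ is a finite combination of moments $\mu_t(u^n)$, so $B(t)$ is affine in $t$. At $t=\eta$ we have $\mu_\eta=\mu$, so $\langle f,\widetilde g f\rangle_\eta=\langle f\phi,gf\phi\rangle_{L^2}$. Hypothesis \eqref{eq:strict-weight} then says exactly that $B(\eta)\succ0$.

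Second, apply continuity. Because $B(t)$ is affine and $B(\eta)$ is positive definite on the finite-dimensional space $\mathbb C[u]_{\le D}$, we have $B(t)\succ0$ for $|t-\eta|<\delta$. It follows that $F^\dagger B(t)F\succeq0$ there, with the denominator positive as before. Intersecting this neighborhood with the ones for the plain Gram tests and the other weights gives a common interval. Choosing $t^-<\eta<t^+$ inside it yields $L^\pm$ as in the theorem, now satisfying all localizing tests. The certificate statement then follows verbatim as in Corollary~\ref{cor:oneD-no-certificate}: enlarge $\mathcal W$ to contain every $A_i^\dagger g_kA_i$ appearing in the certificate, and evaluate $L^\mp$ on both sides of the proposed identity.

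The main obstacle is the first step, namely justifying that $\langle f,\widetilde g f\rangle_\eta$ really equals the $L^2$ expression. This needs $\widetilde g$ acting on polynomials to reproduce $g(f\phi)=(\widetilde gf)\phi$, together with $\mu_\eta$ being genuine integration against $|\phi|^2$. Both follow from the definitions, because $f\phi$ is a Schwartz function on which Weyl polynomials act classically. The essential point is strict positivity at $\eta$: mere semidefiniteness would not survive perturbation, and that is exactly why hypothesis \eqref{eq:strict-weight} is imposed.
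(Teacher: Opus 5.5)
Your proposal is correct and follows essentially the paper's argument. You write the localizing form as $\mu_t(\overline f\,\widetilde g f)/\mu_t(P_e^2)$ with $f=\widetilde A P_e$, get positive definiteness at $t=\eta$ from \eqref{eq:strict-weight}, keep it near $\eta$ by affine dependence on $t$, and intersect finitely many neighborhoods. The only difference is that you take the Gram matrix on all of $\mathbb C[u]_{\le D}$ rather than on the image space $S_k$, which is harmless; the continuity step needs $B(t)$ to be Hermitian, and this follows from \eqref{eq:formal-adjoint} with $g^\dagger=g$.
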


\begin{proof}
Fix $k$ and let $S_k$ be the image of $\operatorname{span}\{G_{k,i}\}_i$ under $G\mapsto\widetilde G P_e$. On this finite-dimensional polynomial space consider
\[
q_{k,t}(f)=\mu_t(\overline f\,\widetilde g_k f).
\]
Its matrix entries depend affinely on $t$, and the formal adjoint identity makes it Hermitian. At $t=\eta$ it is positive definite by \eqref{eq:strict-weight}. Hence it stays positive definite near $\eta$. Moreover,
\[
\omega_{e,t}(G^\dagger g_kG)
=\frac{q_{k,t}(\widetilde G P_e)}{\mu_t(P_e^2)}.
\]
This proves the weighted Gram inequality, including any kernel arising from $\widetilde G P_e=0$. Intersect the finitely many positivity neighborhoods and choose $t$ on either side of $\eta$. The certificate statement follows by the same evaluation argument as in Corollary~\ref{cor:oneD-no-certificate}.
\end{proof}

The argument requires $q_{k,\eta}$ to be positive definite on $S_k$: a form
that is only positive semidefinite at $t=\eta$ can become indefinite under
arbitrarily small changes of $t$. The admissible neighborhood of $\eta$ depends
on the finite collection of tests. Thus each finite relaxation has feasible
parameters on both sides of $\eta$, but no single $t\neq\eta$ is asserted to be
feasible at all degrees.

\section{Convergence to the physical energy shell}\label{Sec:Convergence}

Under the polynomial hypotheses below, the eigenstate relations give moment bounds uniform in the truncation order. These bounds yield reconstruction and convergence to the physical energy shell.

\subsection{Polynomial hypotheses}

Consider $H_V$ and $\mathcal F_K^V(E)$ from Section~\ref{Sec:Hierarchy}, with $V\in\mathbb R[q_1,\ldots,q_d]$ of degree $2r$. Assume that $H_V$ is essentially self-adjoint on $C_c^\infty(\mathbb R^d)$, and write $\partial_j=\partial/\partial q_j$. Let $\Sigma_{2r}$ be the cone of real sums of squares $\sum_s g_s(q)^2$ with $\deg g_s\le r$.

We assume that there are real constants $v_p,v_q$ (not necessarily equal), nonnegative constants $a_{j\ell},b_{j\ell}$ for $1\le j\le d$ and $1\le\ell\le r$, positive constants $\epsilon_j$, and integers $1\le r_j\le r$ such that
\begin{align}
V-v_p&\in\Sigma_{2r},\label{eq:sos-potential}\\
a_{j\ell}(V-v_p)+b_{j\ell}-(-1)^{\ell+1}\partial_j^{2\ell}V&\in\Sigma_{2r},\qquad 1\le\ell\le r,\label{eq:sos-derivative}\\
V-v_q-\sum_{j=1}^d\epsilon_jq_j^{2r_j}&\in\Sigma_{2r}.\label{eq:sos-confinement}
\end{align}
Each condition is a fixed polynomial identity that can be checked by a standard 
semidefinite program. Its role here is that, being a sum of squares, it is
evaluated nonnegatively by every truncated positive functional, so the
resulting inequalities hold on the whole feasible set and not only for
physical states, as in moment--SOS hierarchies \cite{Parrilo2003,Putinar1993}.

A simple coupled example is
\begin{equation}\label{eq:coupled-quartic}
V(x,y)=x^4+y^4+x^2y^2.
\end{equation}
This potential is a sum of squares, and
\begin{align*}
V-x^4-y^4&=(xy)^2,\\
V+37-\partial_x^2V&=(x^2-6)^2+(y^2-1)^2+(xy)^2.
\end{align*}
The second identity also holds with $x$ and $y$ interchanged. Since $\partial_x^4V=\partial_y^4V=24$, the hypotheses hold with
\[
v_p=v_q=0,\qquad r=r_x=r_y=2,\qquad \epsilon_x=\epsilon_y=1,
\]
and $(a_{j1},b_{j1})=(1,37)$, $(a_{j2},b_{j2})=(0,0)$ for $j=x,y$. The smooth nonnegative potential also gives essential self-adjointness \cite{Simon1973}.

Navascu\'es et al. reconstruct a Schr\"odinger state from a positive Weyl functional satisfying
\[
|L(s)|\le C_0c_0^{|s|}\Gamma\!\left(\frac{|s|+k_0+1}{2}\right)
\]
for all creation and annihilation monomials $s$, with fixed constants $C_0,c_0>0$ and $k_0\in\mathbb N$ \cite[Lemma~7]{Navascues2013}. Here the eigenstate relations and \eqref{eq:sos-potential}--\eqref{eq:sos-confinement} give moment bounds uniform over the finite feasible sets. The momentum bound \eqref{eq:p-growth-finite} allows growth of order $D_j^n(n!)^2$, which exceeds the growth allowed by that lemma. We establish regularity through analytic vectors in the GNS representation, then use the eigenstate relations to identify the energy support. This yields convergence of every fixed-degree moment projection to the physical moment set at the prescribed energy.

\subsection{Uniform moment bounds}

The momentum estimate rests on the identity
\begin{equation}\label{eq:sandwich-identity}
\frac12\{p_j^{2m},f\}=\sum_{\ell=0}^{m}(-1)^\ell c_{m\ell}\,p_j^{m-\ell}(\partial_j^{2\ell}f)p_j^{m-\ell},
\end{equation}
valid for every real polynomial $f$ and $m\ge1$, where $\{A,B\}:=AB+BA$ denotes the anticommutator and
\begin{equation}\label{eq:sandwich-coeff}
c_{m\ell}=\frac{m}{m+\ell}\binom{m+\ell}{2\ell}.
\end{equation}
It follows, for example, by writing left and right multiplication by $p_j$ as $X$ and $Y$ and expanding $(X^{2m}+Y^{2m})/2$ in the variables $XY$ and $(X-Y)^2$, with $(X-Y)f=[p_j,f]=-\mathrm{i}\partial_jf$.

\begin{lemma}[Uniform moment bounds]\label{lem:uniform-moment-bounds}
Assume \eqref{eq:sos-potential}--\eqref{eq:sos-confinement} and fix $E\in\mathbb{R}$. For each $j$ there exist constants $C_j,D_j\ge1$, independent of $K$, such that for every $K\ge r$ with $\mathcal F_K^V(E)\neq\varnothing$ and every $L\in\mathcal F_K^V(E)$,
\begin{align}
0\le L(p_j^{2n})&\le D_j^n(n!)^2,\qquad 0\le n\le K-r+1,\label{eq:p-growth-finite}\\
0\le L(q_j^{2n})&\le C_j^n(n!)^{2/(r_j+1)},\qquad 0\le n\le K-r+r_j.\label{eq:q-growth-finite}
\end{align}
In particular, every fixed PBW moment is uniformly bounded once $K$ is sufficiently large.
\end{lemma}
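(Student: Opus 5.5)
The momentum bound comes first, by induction on $n$. The base is $L(1)=1$. The engine is the energy relation combined with the sandwich identity \eqref{eq:sandwich-identity}, applied with $f=V-E$. The left relation $L(O(H_V-E))=0$ and its Hermitian mirror give $L(\{p_j^{2m},H_V-E\})=0$ for $2m+2r\le 2K$. We expand $H_V-E=\tfrac12\sum_k p_k^2+(V-E)$. The kinetic part contributes $L(p_j^{2m}p_k^2)$, which is a nonnegative square $L((p_j^mp_k)^\dagger(p_j^mp_k))$ because the momenta commute. Its $k=j$ term is $L(p_j^{2m+2})$. The potential part is $\tfrac12\{p_j^{2m},V-E\}$, which we rewrite by \eqref{eq:sandwich-identity} as
\[
\sum_{\ell=0}^m(-1)^\ell c_{m\ell}\,p_j^{m-\ell}(\partial_j^{2\ell}V)p_j^{m-\ell}.
\]
This yields an inequality of the form
\[
\tfrac12 L(p_j^{2m+2})+L\bigl(p_j^m(V-v_p)p_j^m\bigr)\le (E-v_p)L(p_j^{2m})+\sum_{\ell\ge1}(-1)^{\ell+1}c_{m\ell}L\bigl(p_j^{m-\ell}(\partial_j^{2\ell}V)p_j^{m-\ell}\bigr).
\]
Here we dropped the $k\ne j$ kinetic terms by positivity, and the term $L(p_j^m(V-v_p)p_j^m)$ is nonnegative by \eqref{eq:sos-potential} and $M_K\succeq0$. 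Only $\ell\le r$ contributes, since $\deg V=2r$.

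Each derivative term is controlled by \eqref{eq:sos-derivative}. Sandwiching that SOS between $p_j^{m-\ell}$ gives a nonnegative $L$-value, so
\[
(-1)^{\ell+1}L\bigl(p_j^{m-\ell}\partial_j^{2\ell}V\,p_j^{m-\ell}\bigr)\le a_{j\ell}L\bigl(p_j^{m-\ell}(V-v_p)p_j^{m-\ell}\bigr)+b_{j\ell}L(p_j^{2m-2\ell}).
\]
The right side still contains the sandwiched potentials $P_s:=L(p_j^s(V-v_p)p_j^s)$ for lower $s$. I would therefore run a joint induction on the pair $(L(p_j^{2m}),P_m)$. The same inequality at level $m$ bounds $\tfrac12L(p_j^{2m+2})+P_m$ by a combination of earlier $L(p_j^{2s})$ and $P_s$ with $s<m$, with coefficients $c_{m\ell}$. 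For fixed $\ell$ one has $c_{m\ell}\lesssim m^{2\ell}/(2\ell)!$. The induction hypothesis $X_s\le D^s(s!)^2$ with $X_s=\max(L(p_j^{2s}),P_s)$ then gives
\[
X_{m+1}\le \sum_{\ell=1}^r C\,m^{2\ell}D^{m-\ell}((m-\ell)!)^2+CD^m(m!)^2.
\]
Since $m^{2\ell}((m-\ell)!)^2\le C'(m!)^2$, this is at most $C''D^{m}(m!)^2\le D^{m+1}((m+1)!)^2$ once $D$ is chosen large relative to $C''$. Degree bookkeeping ($2m+2r\le2K$) gives the range $n\le K-r+1$. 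The sign of $E-v_p$ is harmless, since we use $|E-v_p|$.

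For positions we use the confinement hypothesis \eqref{eq:sos-confinement}. The relation $L(q_j^{2n}(H_V-E))=0$, symmetrized, gives $L(q_j^n(V-E)q_j^n)$ plus a kinetic term. The kinetic term $\tfrac12L(q_j^np_k^2q_j^n)$ is a square up to commutator corrections. Concretely, $q_j^np_j^2q_j^n=(p_jq_j^n)^\dagger(p_jq_j^n)$ exactly, so it is nonnegative. Hence
\[
\epsilon_jL(q_j^{2n+2r_j})\le L\bigl(q_j^n(V-v_q)q_j^n\bigr)\le (E-v_q)L(q_j^{2n}),
\]
where the first inequality uses \eqref{eq:sos-confinement} sandwiched by $q_j^n$ and discards the other $\epsilon_k$ terms by positivity. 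This alone gives a recursion that increases the degree by $2r_j$ at a fixed multiplier, which is too crude: it yields geometric growth only at lag $r_j$ and does not produce the factorial. The $(n!)^{2/(r_j+1)}$ growth has to come from keeping the kinetic term. For this I would instead use the virial-type identity $L([H_V,q_j^{2n+1}p_j+p_jq_j^{2n+1}])=0$. It relates $L(q_j^{2n}\partial_jV\,q_j)$-type terms to $n\,L(q_j^{2n}p_j^2)$ and to $n^3L(q_j^{2n-2})$. Cauchy--Schwarz with the momentum bound, or the Hölder interpolation $L(q^{2n}p^2)\lesssim$ products, then closes an induction with the exponent $2/(r_j+1)$. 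The balance $n^2\cdot x^{2n-2}$ versus $x^{2n+2r_j}$ gives the increment ratio $n^{2/(r_j+1)}$ per unit step.

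The main obstacle is this position step. It requires carefully bounding the mixed terms arising from $\partial_jV$ using \eqref{eq:sos-confinement}. I would first try bounding them by $L(q_j^{2n}(V-v_q))$ and the already established lower moments. The final claim about fixed PBW moments follows from Cauchy--Schwarz, $|L(q^\alpha p^\beta)|^2\le L(q^{2\alpha})L(p^{2\beta})$, together with the one-variable bounds via $q^{2\alpha}\le$ sums of $q_j^{2|\alpha|}$ (AM--GM as SOS), and similarly for the momenta.
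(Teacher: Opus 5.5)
\textbf{Gap in the position step.} Your inequality $\epsilon_jL(q_j^{2n+2r_j})\le(E-v_q)L(q_j^{2n})$ is false. The symmetrized relation with multiplier $q_j^{2n}$ contains $\tfrac14L(\{q_j^{2n},p_j^2\})$, not $\tfrac12L(q_j^np_j^2q_j^n)$, and
\[
\tfrac12\{q_j^{2n},p_j^2\}=q_j^np_j^2q_j^n-n^2q_j^{2n-2}.
\]
You may drop the square $q_j^np_j^2q_j^n$. The commutator correction, however, survives and adds $\tfrac{n^2}{2}L(q_j^{2n-2})$ to the right-hand side. Without it the inequality already fails for the harmonic-oscillator ground state ($V=q^2/2$, $E=\tfrac12$, $v_q=0$, $\epsilon=\tfrac12$, $r_j=1$, all hypotheses satisfied). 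There, at $n=1$, it asserts $\langle q^4\rangle\le\langle q^2\rangle$, i.e.\ $\tfrac34\le\tfrac12$.

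Your assessment of that inequality is also inverted. A geometric bound at lag $r_j$ would be stronger than the claimed $(n!)^{2/(r_j+1)}$ growth, not too crude, and that should have signalled that something had been lost. The virial substitute is not carried out, and under the stated hypotheses it does not close. The relation $L([H_V,q_j^{2n+1}p_j+p_jq_j^{2n+1}])=0$ produces $L(q_j^{2n+1}\partial_jV)$, whose sign and size the certificates do not control: they involve only $V-v_p$, $V-v_q-\sum_k\epsilon_kq_k^{2r_k}$ and the even derivatives $\partial_j^{2\ell}V$. It also produces mixed terms $L(q_j^{2n}p_j^2)$, for which you give no interpolation.

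\textbf{The fix.} Simply keep the correction. With $Q_a=L(q_j^{2a})$ and $\Delta_q=E-v_q$, one gets
\[
\epsilon_jQ_{a+r_j}\le\Delta_qQ_a+\tfrac{a^2}{2}Q_{a-1}.
\]
This is a recursion of lag $r_j+1$ with factor $a^2$, which is exactly your heuristic balance. It yields $(n!)^{2/(r_j+1)}$ because $a(a+1)\cdots(a+r_j)\ge a^{r_j+1}$. You also need starting values $Q_1,\dots,Q_{r_j}$, which your sketch does not supply. The paper takes $\epsilon_jQ_{r_j}\le\Delta_q$ from the mean-energy relation and $Q_s\le1+\Delta_q/\epsilon_j$ from the univariate sum of squares $1+x^{2r_j}-x^{2s}$. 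It then closes the induction with $C_j=2+(2\Delta_q+1)/\epsilon_j$.

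\textbf{Momentum step.} This follows the paper's route, but the indexing needs fixing. The level-$m$ inequality bounds $R_m=L(p_j^m(V-v_p)p_j^m)$ (not $R_{m+1}$) together with $L(p_j^{2m+2})$. It does so only in the form $R_m\le(E-v_p)D^m(m!)^2+\cdots$. Hence the hypothesis $R_m\le D^m(m!)^2$, sharing the normalization of $L(p_j^{2m})$, is not reproduced when $E-v_p>1$. Either give $R_m$ its own constant, or pair it with $L(p_j^{2m+2})$ as the paper does via $R_m\le B_{m+1}/2$.
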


\begin{proof}
Set
\[
U=V-v_p,\qquad \Delta_p=E-v_p,\qquad P_m=L(p_j^{2m}),\qquad R_m=L(p_j^mUp_j^m).
\]
Write $U=\sum_s g_s(q)^2$ with $\deg g_s\le r$. Whenever $m+r\le K$,
\[
p_j^mUp_j^m=\sum_s(g_sp_j^m)^\dagger(g_sp_j^m),
\]
so truncated positivity gives $P_m,R_m\ge0$ whenever these quantities are visible. The mean-energy relation and \eqref{eq:sos-potential} also give $\Delta_p\ge0$ whenever the feasible set is nonempty. Taking the real part of the eigenstate relation with multiplier $p_j^{2m}$, using \eqref{eq:sandwich-identity}, and dropping the nonnegative kinetic terms in the other directions gives, for $m+r\le K$,
\begin{equation}\label{eq:moment-recursion}
\frac12P_{m+1}+R_m
\le
\Delta_pP_m+
\sum_{\ell=1}^{\min(m,r)}
c_{m\ell}\bigl(a_{j\ell}R_{m-\ell}+b_{j\ell}P_{m-\ell}\bigr).
\end{equation}
The estimate \eqref{eq:sos-derivative} is used here after sandwiching it between $p_j^{m-\ell}$. At $m=0$, the mean-energy relation gives $P_1/2+R_0\le\Delta_p$. Define $B_0=1$, $B_1=2\Delta_p$, and
\[
B_{m+1}
=
2\Delta_pB_m+
\sum_{\ell=1}^{\min(m,r)}
c_{m\ell}\bigl(a_{j\ell}B_{m-\ell+1}+2b_{j\ell}B_{m-\ell}\bigr).
\]
Simultaneous induction gives $P_m\le B_m$ and $R_m\le B_{m+1}/2$. To make the factorial bound explicit, define
\begin{equation}\label{eq:explicit-D}
\kappa_1=\frac12,\qquad
\kappa_\ell=\frac{\ell^{\ell-2}}{(2\ell)!}\quad(\ell\ge2),\qquad
D_j=1+2\Delta_p+\sum_{\ell=1}^{r}\kappa_\ell(a_{j\ell}+2b_{j\ell}).
\end{equation}
For $m\ge\ell\ge1$, the coefficient formula gives
\begin{equation}\label{eq:coefficient-bound}
c_{m\ell}\left(\frac{(m-\ell+1)!}{(m+1)!}\right)^2\le\kappa_\ell.
\end{equation}
For $\ell=1$ the left-hand side is $m^2/[2(m+1)^2]$. For $\ell\ge2$ it equals
\[
\frac{1}{(2\ell)!}
\frac{m(m-\ell+1)\prod_{i=2}^{\ell-1}(m+i)}
{\prod_{i=m-\ell+2}^{m+1}i}.
\]
Pair $m$ with $m+1$, and $m-\ell+1$ with $m-\ell+2$; both ratios are at most one. Each remaining ratio $(m+i)/(m-\ell+1+i)$ is at most $\ell$. This proves \eqref{eq:coefficient-bound}, including the empty product for $\ell=2$.
Assuming $B_n\le D_j^n(n!)^2$ for the preceding indices and dividing the recursion by $D_j^{m+1}((m+1)!)^2$, the sum of the resulting coefficients is at most
\[
\frac{2\Delta_p}{D_j}
+\sum_{\ell=1}^{r}\kappa_\ell
\left(a_{j\ell}D_j^{-\ell}+2b_{j\ell}D_j^{-\ell-1}\right)\le\frac{D_j-1}{D_j}<1.
\]
By induction, $B_m\le D_j^m(m!)^2$, proving \eqref{eq:p-growth-finite}.

For the position moments write
\[
Q_a=L(q_j^{2a}),\qquad \Delta_q=E-v_q.
\]
The mean-energy relation and \eqref{eq:sos-confinement} imply $\Delta_q\ge0$.
The identity
\[
\frac12\{q_j^{2a},p_j^2\}=q_j^ap_j^2q_j^a-a^2q_j^{2a-2},
\]
together with \eqref{eq:sos-confinement}, gives
\begin{equation}\label{eq:q-recursion}
\epsilon_jQ_{a+r_j}\le\Delta_qQ_a+\frac{a^2}{2}Q_{a-1},
\qquad a+r\le K,
\end{equation}
where the last term is omitted for $a=0$. The mean-energy relation gives $\epsilon_jQ_{r_j}\le\Delta_q$. For $1\le s\le r_j$, the univariate polynomial $1+x^{2r_j}-x^{2s}$ is nonnegative on $\mathbb R$ and hence is a sum of squares; evaluating it with $L$ gives
\[
Q_s\le1+\frac{\Delta_q}{\epsilon_j}.
\]
Set
\[
C_j=2+\frac{2\Delta_q+1}{\epsilon_j}.
\]
For $n=a+r_j$, substitution of the induction hypothesis into \eqref{eq:q-recursion} shows that the two normalized contributions are bounded respectively by
\[
\frac{\Delta_q}{\epsilon_j}C_j^{-r_j},\qquad
\frac{1}{2\epsilon_j}C_j^{-(r_j+1)},
\]
where we used
\[
a^2\left(\frac{(a-1)!}{(a+r_j)!}\right)^{2/(r_j+1)}\le1.
\]
Since $r_j\ge1$ and $C_j\ge1$, their sum is at most
\[
\frac{\Delta_q+1/2}{\epsilon_jC_j}
=\frac{C_j-2}{2C_j}<1.
\]
For $1\le s\le r_j$, the initial bound is at most $C_j$, and therefore at most $C_j^s(s!)^{2/(r_j+1)}$. This closes the induction and proves \eqref{eq:q-growth-finite}.

For multi-indices $\alpha,\beta\in\mathbb N_0^d$, write $q^\alpha=\prod_jq_j^{\alpha_j}$, $p^\beta=\prod_jp_j^{\beta_j}$, $|\alpha|=\sum_j\alpha_j$, and $\alpha!:=\prod_j\alpha_j!$, with analogous notation for $\beta$. Repeated Cauchy--Schwarz bounds every fixed multivariable position or momentum moment by finitely many one-coordinate even moments. Applying Cauchy--Schwarz once more to the PBW-ordered monomial $q^\alpha p^\beta$ proves the final assertion.
\end{proof}

\subsection{Reconstruction and convergence}

Let $\pi_D$ denote restriction to PBW moments of degree at most $D$, and let $\mathcal Q_D^V(E)$ be the set of such moment vectors obtained from density operators supported on $\ker(H_V-E)$. Distances between these truncated moment vectors are computed in the finite-dimensional real coordinate space obtained by splitting complex moments into real and imaginary parts; the particular choice of norm is immaterial for convergence. Condition \eqref{eq:sos-confinement} implies
\[
V(q)\ge c|q|^2-C
\]
for suitable $c>0$ and $C$, so $H_V$ has compact resolvent.

Its eigenfunctions have all polynomial Weyl moments finite. To see this, let $H_V\psi=E\psi$ and write $\langle q\rangle=(1+|q|^2)^{1/2}$. For any $a>0$,
\[
V-E-\frac12|\nabla(a\langle q\rangle)|^2\ge1
\]
outside a compact set. The Agmon estimate, applied first to bounded truncations of $a\langle q\rangle$, gives $e^{a\langle q\rangle}\psi\in L^2$ \cite{Agmon1982}. To control derivatives, iterate interior elliptic estimates for $-\Delta\psi=2(E-V)\psi$ on concentric balls. Since every derivative of $V$ has polynomial growth, for each integer $s\ge0$ there are constants $C_s,N_s$ independent of the center $z$ such that
\[
\|\psi\|_{H^{s+2}(B_1(z))}
\le C_s\langle z\rangle^{N_s}\|\psi\|_{L^2(B_2(z))}.
\]
Summing over a cover by unit balls whose doubled balls have uniformly bounded overlap, the exponential bound gives $\langle q\rangle^N\partial^\beta\psi\in L^2$ for every $N\ge0$ and multi-index $\beta$. Every polynomial in $q,p$ therefore acts on $\psi$ with finite norm. Each eigenspace is finite dimensional, so its density operators form a compact set and their polynomial moments depend continuously on the density operator. Hence $\mathcal Q_D^V(E)$ is compact.

\begin{theorem}[Reconstruction and convergence]\label{thm:reconstruction}
Assume that $H_V$ is essentially self-adjoint on $C_c^\infty(\mathbb R^d)$ and that \eqref{eq:sos-potential}--\eqref{eq:sos-confinement} hold. Let $K_n\to\infty$ and choose $L_n\in\mathcal F_{K_n}^V(E)$. Then a subsequence converges moment by moment on the full polynomial Weyl algebra to a functional
\begin{equation}\label{eq:rho-representation}
L(A)=\operatorname{Tr}(\rho A),
\end{equation}
where $\rho$ is a positive trace-class operator of unit trace in the Schr\"odinger representation and
\begin{equation}\label{eq:energy-support}
\rho=P_E\rho P_E,
\end{equation}
with $P_E$ the spectral projection onto $\ker(H_V-E)$.

If $E\in\operatorname{spec}(H_V)$, then for every fixed $D$,
\begin{equation}\label{eq:set-convergence}
\lim_{K\to\infty}
\sup_{L\in\mathcal F_K^V(E)}
\operatorname{dist}\bigl(\pi_DL,\mathcal Q_D^V(E)\bigr)=0.
\end{equation}
If $E\notin\operatorname{spec}(H_V)$, then $\mathcal F_K^V(E)$ is empty for all sufficiently large $K$. The convergence statement remains valid for any strengthened feasible sets contained in $\mathcal F_K^V(E)$ that retain the truncations of all physical states at energy $E$. No quantitative convergence rate or effective exclusion level is asserted.
\end{theorem}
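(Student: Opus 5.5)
The plan is to pass from uniform moment bounds to a limiting positive functional, then to a GNS representation, and then to a regular Schr\"odinger state supported on $\ker(H_V-E)$.

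\emph{Compactness.} By Lemma~\ref{lem:uniform-moment-bounds}, every fixed PBW moment $L_n(q^\alpha p^\beta)$ is bounded uniformly in $n$ once $K_n$ is large. A diagonal argument over the countable PBW basis then gives a subsequence converging moment by moment to a functional $L$ on all of $\mathcal A_d$. All defining conditions are closed under pointwise limits and eventually visible at any fixed degree. Hence $L$ is normalized, Hermitian and positive ($L(A^\dagger A)\ge0$ for all $A$), and it satisfies $L(O(H_V-E))=0$ and $L([H_V,O])=0$ for all $O$. The bounds \eqref{eq:p-growth-finite}--\eqref{eq:q-growth-finite} also pass to the limit, for all $n$.

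\emph{GNS and regularity.} Next form the GNS triple $(\mathcal H_L,\pi,\Omega)$ with dense domain $\pi(\mathcal A_d)\Omega$ and symmetric operators $Q_j=\pi(q_j)$, $P_j=\pi(p_j)$. I aim to show that every vector $\pi(A)\Omega$ is analytic for each $Q_j$ and $P_j$. Since $\|P_j^n\pi(A)\Omega\|^2=L(A^\dagger p_j^{2n}A)$, this requires bounds of the form $D^n (n!)^2$ for these mixed moments; such growth yields only quasi-analytic (Denjoy--Carleman, $\sum (D^n(n!)^2)^{-1/2n}=\sum 1/(n\sqrt D)=\infty$) vectors, not analytic ones. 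The first approach is to re-run the recursion of Lemma~\ref{lem:uniform-moment-bounds} with the eigenstate relation multiplied by $A^\dagger p_j^{2m}A$, giving $\|P_j^n\pi(A)\Omega\|\le C_A D^{n/2} n!$. This is the main obstacle. Quasi-analytic vectors suffice for essential self-adjointness by Nussbaum's theorem. For the commutation of the unitary groups, I would use that $\Omega$ is an eigenvector of the symmetric operator $\pi(H_V)$ with eigenvalue $E$. I would then show that $\pi(H_V)$, restricted to the cyclic domain, has the span of vectors of the form $\pi(A)\Omega$ controlled by $H$-powers. The cleaner route is to prove analyticity for the $Q_j$ (the growth $(n!)^{2/(r_j+1)}$ with $r_j\ge1$ gives $\|Q_j^n v\|\lesssim C^n (n!)^{1/2}$, which is analytic) and quasi-analyticity for the $P_j$. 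Then I would integrate via the Weyl relations $e^{isQ}P e^{-isQ}=P-s$ on the invariant domain. This is justified because $e^{isQ_j}$ acts on analytic vectors by a convergent series that maps the polynomial domain into its closure and commutes correctly with the closures of $P_k$. With this, the closures generate regular Weyl unitaries $e^{isQ_j}e^{itP_k}$ satisfying the Weyl relations.

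\emph{Stone--von Neumann and energy support.} Stone--von Neumann then identifies $\mathcal H_L$ with a direct sum of Schr\"odinger representations. The vector state $\Omega$ becomes a density operator $\rho$ with $L(A)=\operatorname{Tr}(\rho A)$ on polynomials, once one checks that the polynomial domain lies in the domains in the Schr\"odinger picture; this follows because the moments are finite and closures agree. From $L((H_V-E)^2)=0$ (a visible relation) we get $\|(\overline{\pi(H_V)}-E)\Omega\|=0$. Essential self-adjointness of $H_V$ on $C_c^\infty$, together with the fact that $\rho^{1/2}$ maps into the maximal domain, gives $(H_V-E)\rho^{1/2}=0$, so $\rho=P_E\rho P_E$. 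If $E\notin\operatorname{spec}(H_V)$, then $P_E=0$ contradicts $\operatorname{Tr}\rho=1$. Hence, arguing by contradiction via the first part, $\mathcal F_K^V(E)$ is empty for large $K$.

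\emph{Set convergence.} If \eqref{eq:set-convergence} failed, there would be $\delta>0$, $K_n\to\infty$ and $L_n$ with $\operatorname{dist}(\pi_DL_n,\mathcal Q_D^V(E))\ge\delta$. Extracting a convergent subsequence, the limit satisfies $\pi_DL\in\mathcal Q_D^V(E)$, which gives the contradiction. For strengthened sets the same argument applies verbatim, since they are subsets of $\mathcal F_K^V(E)$.
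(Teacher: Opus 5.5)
There is a genuine gap at the regularity step, and it starts with a miscalculation. A bound $L(A^\dagger p_j^{2n}A)\le C_A^2D^n(n!)^2$ gives $\|P_j^n\pi(A)\Omega\|\le C_AD^{n/2}n!$. Hence $\sum_n s^n\|P_j^n\pi(A)\Omega\|/n!<\infty$ for $0<s<D^{-1/2}$, which is exactly Nelson's analytic-vector condition. The divergent Denjoy--Carleman sum you computed is only the weaker quasi-analyticity test and does not rule out analyticity. So your asymmetric scheme (analytic $Q_j$, quasi-analytic $P_j$) is unnecessary. As sketched, it is also incomplete. Conjugating $\overline{P_k}$ by the series for $e^{isQ_j}$ can yield the $Q$--$P$ and $Q$--$Q$ relations, but nothing in your plan gives strong commutation of $\overline{P_j}$ and $\overline{P_k}$ for $j\neq k$. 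That commutation does not follow from commutation on a common core (Nelson's example), and Stone--von Neumann requires it.

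The substantive missing ingredient is an analytic bound for every vector $\pi(A)\Omega$ of the invariant domain, not only for $\Omega$. The same applies to your estimate $\|Q_j^nv\|\lesssim C^n(n!)^{1/2}$, which you have only for $v=\Omega$. You name this as the main obstacle but do not resolve it. Your suggested remedy, re-running the recursion of Lemma~\ref{lem:uniform-moment-bounds} with multiplier $A^\dagger p_j^{2m}A$, does not close. The vector $\pi(A)\Omega$ is not an eigenvector of $\pi(H_V)$, so the eigenstate relation produces commutator terms $L(A^\dagger p_j^{2m}[A,H_V])$ of higher degree that the recursion does not control.

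The paper makes no further use of the eigenstate relations here. The Carleman condition gives representing measures for the commutative position and momentum marginals. H\"older's inequality then gives $L(q^{2\alpha})\le M^{|\alpha|}(\alpha!)^2$ and $L(p^{2\beta})\le M^{|\beta|}(\beta!)^2$, and Cauchy--Schwarz gives $|L(q^\alpha p^\beta)|\le M^{(|\alpha|+|\beta|)/2}\alpha!\,\beta!$. Normal-ordering $p^\beta q^{2\alpha}p^\beta$ and then $q_j^np_k^mA$ yields $\|q_j^np_k^mA\Omega\|\le C_AR^{n+m}n!\,m!\,(n+m+1)^{N_A}$. Thus every vector of $\mathcal D_L$ is analytic for every canonical generator. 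The Flato--Simon--Snellman--Sternheimer theorem then integrates the Heisenberg algebra to a regular representation, with all Weyl relations, including the commuting $P$'s.

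Your other steps match the paper and are fine once regularity is in place: compactness, passing the constraints to the limit, the Stone--von Neumann passage to $\rho$, the energy-support argument via $L((H_V-E)^2)=0$ and essential self-adjointness, and the contradiction arguments for \eqref{eq:set-convergence} and for emptiness off the spectrum.
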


\begin{proof}
Enumerate the PBW monomials. Lemma~\ref{lem:uniform-moment-bounds} and a diagonal subsequence argument give pointwise convergence of all moments to a linear functional $L$ on the full Weyl algebra. Normalization and Hermiticity pass to the limit. Every fixed square and every fixed eigenstate or stationarity relation is imposed at all sufficiently high truncation orders, and pointwise limits preserve these inequalities and identities; hence
\begin{equation}\label{eq:infinite-positive}
L(1)=1,\qquad L(A^\dagger A)\ge0,
\end{equation}
and the full polynomial eigenstate relations hold.

The restrictions of $L$ to the commuting position and momentum polynomial algebras satisfy the Carleman condition. Indeed,
\[
L(p_j^{2n})^{-1/(2n)}
\ge D_j^{-1/2}(n!)^{-1/n}
\ge D_j^{-1/2}n^{-1},
\]
while
\[
L(q_j^{2n})^{-1/(2n)}
\ge C_j^{-1/2}(n!)^{-1/[n(r_j+1)]}
\ge C_j^{-1/2}n^{-1/(r_j+1)}.
\]
Both corresponding series diverge, with the usual convention that a zero even moment contributes $+\infty$. The restrictions of $L$ to $\mathbb R[q_1,\ldots,q_d]$ and $\mathbb R[p_1,\ldots,p_d]$ are normalized positive functionals on commutative polynomial algebras. The coordinatewise Carleman conditions therefore allow the multivariate Carleman--Nussbaum theorem to be applied to each restriction, yielding representing probability measures $\mu_q$ and $\mu_p$ \cite[Proposition~4.8]{Ghasemi2016}; see also Refs.~\cite{Nussbaum1965,deJeu2003,Schmudgen2017}. These marginal measures allow us to use classical inequalities within each commuting algebra. For $n=|\alpha|$, H\"older's inequality in $\mu_q$ gives
\[
L(q^{2\alpha})\le\prod_{j:\alpha_j>0}L(q_j^{2n})^{\alpha_j/n},
\]
and the same argument applies to momentum. Combining the one-coordinate estimates with $n!/\alpha!\le d^n$, we may take $M=d^2\max_j\{C_j,D_j\}\ge1$ to obtain
\begin{equation}\label{eq:marginal-growth}
L(q^{2\alpha})\le M^{|\alpha|}(\alpha!)^2,\qquad
L(p^{2\beta})\le M^{|\beta|}(\beta!)^2.
\end{equation}
Cauchy--Schwarz for the full positive functional then yields
\begin{equation}\label{eq:normal-growth}
|L(q^\alpha p^\beta)|
\le
M^{(|\alpha|+|\beta|)/2}\alpha!\,\beta!.
\end{equation}

Apply the GNS construction to $L$ \cite{Schmudgen1990}. For completeness, let
\[
\mathcal N_L:=\{A\in\mathcal A_d:L(A^\dagger A)=0\}.
\]
Cauchy--Schwarz for the full positive functional gives $L(C^\dagger A)=0$ for every $C$ whenever $A\in\mathcal N_L$. Taking $C=B^\dagger BA$ shows $L((BA)^\dagger BA)=0$, so $\mathcal N_L$ is a left ideal. Left multiplication is therefore well defined on the quotient, and after completion we obtain the cyclic representation $(\pi_L,\mathcal H_L,\Omega)$. 
The vectors $\pi_L(A)\Omega$, $A\in\mathcal A_d$, form a subspace $\mathcal D_L$, dense by construction of the completion and invariant under every $\pi_L(B)$, since $\pi_L(B)\pi_L(A)\Omega=\pi_L(BA)\Omega$. We suppress $\pi_L$ in the notation below; left multiplication by the canonical generators defines symmetric operators $q_j,p_j$ on $\mathcal D_L$.

The norm growth needed for analyticity follows explicitly from \eqref{eq:normal-growth}. The multi-index normal-ordering identity
\[
p^\beta q^{2\alpha}
=
\sum_{\substack{\gamma\le\beta\\ \gamma\le2\alpha}}
\binom{\beta}{\gamma}(-\mathrm{i})^{|\gamma|}
\frac{(2\alpha)!}{(2\alpha-\gamma)!}
q^{2\alpha-\gamma}p^{\beta-\gamma}
\]
together with $\|q^\alpha p^\beta\Omega\|^2=L(p^\beta q^{2\alpha}p^\beta)$ and \eqref{eq:normal-growth} gives
\[
\|q^\alpha p^\beta\Omega\|^2
\le e^{d/M}M^{|\alpha|+|\beta|}(2\alpha)!(2\beta)!.
\]
Indeed, after dividing each normal-ordered term by $M^{|\alpha|+|\beta|}(2\alpha)!(2\beta)!$, its absolute value is at most $M^{-|\gamma|}/\gamma!$, since
\[
\binom{\beta}{\gamma}\frac{(2\beta-\gamma)!}{(2\beta)!}\le\frac1{\gamma!}.
\]
Summing over $\gamma\in\mathbb N_0^d$ gives $e^{d/M}$. Factorials and inequalities between multi-indices are understood componentwise. Since $(2n)!\le4^n(n!)^2$,
\begin{equation}\label{eq:vector-growth}
\|q^\alpha p^\beta\Omega\|
\le
e^{d/(2M)}(2\sqrt M)^{|\alpha|+|\beta|}\alpha!\,\beta!.
\end{equation}
For a fixed monomial $A=q^\mu p^\nu$, normal ordering gives the more explicit formula
\[
q_j^n p_k^m A
=\sum_{c=0}^{\min(m,\mu_k)}
\binom mc(-\mathrm i)^c\frac{\mu_k!}{(\mu_k-c)!}
q^{\mu+n\mathbf e_j-c\mathbf e_k}
p^{\nu+(m-c)\mathbf e_k},
\]
where $\mathbf e_j$ is the $j$th coordinate vector. For fixed $a\ge0$,
\[
(n+a)!\le a!\,n!(n+1)^a,
\qquad
\binom mc\frac{\mu_k!}{(\mu_k-c)!}\le\binom{\mu_k}{c}m^c.
\]
Thus the position and momentum factorials are bounded by $\mu!\,n!(n+1)^{|\mu|}$ and $\nu!\,m!(m+1)^{|\nu|}$, while the sum of the coefficient bounds is at most $(m+1)^{\mu_k}$. Applying \eqref{eq:vector-growth} term by term gives the explicit constants
\[
R=2\sqrt M,\qquad
C_A=e^{d/(2M)}R^{|\mu|+|\nu|}\mu!\,\nu!,\qquad
N_A=2|\mu|+|\nu|,
\]
and the estimate
\begin{equation}\label{eq:analytic-vectors}
\|q_j^n p_k^mA\Omega\|
\le C_AR^{n+m}n!\,m!(n+m+1)^{N_A}
\qquad(n,m\ge0).
\end{equation}
Finite sums are handled by enlarging the constants. Setting $m=0$ or $n=0$ shows that, for $0<s<R^{-1}$,
\[
\sum_{n=0}^\infty\frac{s^n}{n!}\|q_j^nA\Omega\|<\infty,
\qquad
\sum_{n=0}^\infty\frac{s^n}{n!}\|p_j^nA\Omega\|<\infty.
\]
Every vector in $\mathcal D_L$ is therefore analytic for every canonical generator in the sense of Nelson \cite{Nelson1959}.

We can now integrate the infinitesimal canonical relations. Represent the Heisenberg Lie algebra on $\mathcal D_L$ by
\[
X_j\mapsto\mathrm{i}q_j,\qquad
Y_j\mapsto\mathrm{i}p_j,\qquad
Z\mapsto-\mathrm{i}I.
\]
The maps above form a representation of the finite-dimensional Heisenberg Lie algebra by skew-symmetric operators on the common dense invariant domain $\mathcal D_L$. Equation~\eqref{eq:analytic-vectors} shows that every vector in this domain is analytic for each basis generator, while the central generator is bounded. Hypothesis~(C) and Theorem~1 of Flato, Simon, Snellman and Sternheimer therefore apply \cite{Flato1972}: the infinitesimal representation integrates to a strongly continuous unitary representation of the simply connected Heisenberg group. Analyticity makes each canonical generator essentially self-adjoint by Nelson's theorem. The integrated one-parameter groups therefore have generators equal to these closures $Q_j,P_j$, and the group law gives the regular Weyl relations
\begin{equation}\label{eq:Weyl-relations}
e^{\mathrm{i}sQ_j}e^{\mathrm{i}tP_k}
=
e^{-\mathrm{i}st\delta_{jk}}
e^{\mathrm{i}tP_k}e^{\mathrm{i}sQ_j}.
\end{equation}
The Stone--von Neumann theorem with multiplicity gives, for a Hilbert space $\mathcal K$ (the multiplicity space), \cite[Theorem~2.2]{Rosenberg2004}
\begin{equation}\label{eq:SvN-decomposition}
\mathcal H_L\simeq L^2(\mathbb R^d)\otimes\mathcal K.
\end{equation}
Under this unitary equivalence the Weyl groups are the standard Schr\"odinger Weyl groups on the first factor. Uniqueness of self-adjoint generators therefore identifies $Q_j$ with multiplication by $q_j$ and $P_j$ with $-\mathrm{i}\partial_{q_j}$, tensored with the identity on $\mathcal K$.

Let $\Psi$ be the image of $\Omega$ under this identification and define
\[
\rho=\operatorname{Tr}_{\mathcal K}|\Psi\rangle\langle\Psi|,
\]
where $\operatorname{Tr}_{\mathcal K}$ denotes the partial trace over the multiplicity space $\mathcal K$. Then $\rho\ge0$, $\operatorname{Tr}\rho=1$, and $\rho$ is trace class. Writing
\[
\Psi=\sum_\ell\psi_\ell\otimes e_\ell
\]
in an orthonormal basis of $\mathcal K$, note that $\Omega\in\mathcal D_L$ and $\mathcal D_L$ is invariant under the polynomial algebra. Hence $\Psi$ lies in the domain of $A\otimes I$ for every polynomial $A$, and
\[
\sum_\ell\|A\psi_\ell\|^2
=
\|(A\otimes I)\Psi\|^2
<\infty.
\]
Consequently, the series
\[
A\rho=\sum_\ell|A\psi_\ell\rangle\langle\psi_\ell|
\]
converges absolutely in trace norm, since
\[
\sum_\ell\|A\psi_\ell\|\,\|\psi_\ell\|
\le
\left(\sum_\ell\|A\psi_\ell\|^2\right)^{1/2}
\left(\sum_\ell\|\psi_\ell\|^2\right)^{1/2}
<\infty.
\]
Taking its trace gives
\[
L(A)=\operatorname{Tr}(A\rho)
=\sum_\ell\langle\psi_\ell,A\psi_\ell\rangle.
\]
Applying the same argument to $A^\dagger$ shows that $\rho A$, initially defined on $\operatorname{Dom}(A)$, extends uniquely to the trace-class operator $(A^\dagger\rho)^\dagger$. This extension has trace $\operatorname{Tr}(A\rho)$, which justifies the notation $\operatorname{Tr}(\rho A)$ in \eqref{eq:rho-representation}.

It remains to identify the energy support. Taking $A=p_j^2$ and $A=V$ in the square-summability relation above shows that every component $\psi_\ell$ lies in the domains of $p_j^2$ and of multiplication by $V$, with the corresponding squared norms summable over $\ell$. Hence $(-\Delta/2+V)\psi_\ell\in L^2(\mathbb R^d)$ in the distributional sense. Let $H_{\min}$ denote $-\Delta/2+V$ on $C_c^\infty(\mathbb R^d)$. For every $f\in C_c^\infty(\mathbb R^d)$, distributional integration by parts gives
\[
\langle H_{\min}f,\psi_\ell\rangle
=\langle f,(-\Delta/2+V)\psi_\ell\rangle.
\]
Since the second argument on the right belongs to $L^2$, the definition of the adjoint gives $\psi_\ell\in\operatorname{Dom}(H_{\min}^*)$. By the assumed essential self-adjointness, $H_{\min}^*=\overline{H_{\min}}=H_V$, so the distributional expression above is precisely $H_V\psi_\ell$.

The full eigenstate relations allow the multiplier $O=H_V-E$, so
\begin{equation}\label{eq:zero-energy-variance}
0=L((H_V-E)^2)=\|(H_V-E)\Psi\|^2
=\sum_\ell\|(H_V-E)\psi_\ell\|^2.
\end{equation}
Thus every nonzero $\psi_\ell$ lies in $\ker(H_V-E)$, and
\[
\Psi\in\ker(H_V-E)\otimes\mathcal K,
\]
which is equivalent to \eqref{eq:energy-support}.

Suppose now that \eqref{eq:set-convergence} failed for some fixed $D$. Then there would be $K_n\to\infty$ and $L_n\in\mathcal F_{K_n}^V(E)$ whose degree-$D$ projections remain a fixed positive distance from $\mathcal Q_D^V(E)$. The reconstructed subsequence converges instead to the moments of a density operator supported on $\ker(H_V-E)$, a contradiction. If $E$ is not an eigenvalue and arbitrarily high truncation orders were feasible, the same argument would reconstruct a density operator supported on the zero subspace. Hence feasibility must eventually fail. The argument is unchanged for any strengthened hierarchy contained in $\mathcal F_K^V(E)$ and containing all physical truncations.
\end{proof}

\subsection{Application to the sextic oscillator}

For $v(u)=(u^6-11u^2)/2$, set $U_u=v+13$. The identities
\begin{align}
U_u-\frac14u^6
&=\frac14(u^3-2u)^2+\left(u^2-\frac{13}{4}\right)^2+\frac{39}{16},\label{eq:oneD-confinement}\\
40U_u-(15u^4-11)
&=40\left(U_u-\frac14u^6\right)
+10\bigl[u(u^2-1)\bigr]^2+5(u^2-1)^2+6\label{eq:oneD-derivative}
\end{align}
give the required sum-of-squares certificates. In \eqref{eq:sos-potential}--\eqref{eq:sos-confinement}, take
\begin{gather*}
v_p=v_q=-13,\qquad r=r_u=3,\qquad\epsilon_u=\frac14,\\
(a_{u1},b_{u1})=(40,0),\qquad
(a_{u2},b_{u2})=(0,0),\qquad
(a_{u3},b_{u3})=(0,360).
\end{gather*}
The last two choices use $v^{(4)}=180u^2$ and $v^{(6)}=360$. The constants in Lemma~\ref{lem:uniform-moment-bounds} can therefore be chosen as $D_u=50+2e$ and $C_u=110+8e$ at each of the energies $e=-4,0,4$.

Define
\[
\underline b_K^{(e)}=\inf_{L\in\mathcal F_K^h(e)}L(u^2),\qquad
\overline b_K^{(e)}=\sup_{L\in\mathcal F_K^h(e)}L(u^2).
\]
Restriction from level $K+1$ to level $K$ preserves feasibility, so these bounds are monotone. Theorems~\ref{thm:oneD-persistence} and~\ref{thm:reconstruction} give
\begin{equation}\label{eq:oneD-limits}
\underline b_K^{(e)}<a_e<\overline b_K^{(e)}\quad(K\ge3),\qquad
\underline b_K^{(e)}\uparrow a_e,\quad
\overline b_K^{(e)}\downarrow a_e.
\end{equation}
The same one-dimensional hierarchy is therefore nonexact at every finite order and converges to the physical value. The conclusion also holds for nested finite strengthenings of the type in Section~\ref{Sec:OneD} that retain the standard hierarchy.
Table~\ref{tab:bounds} gives numerical estimates of the gaps in \eqref{eq:oneD-limits}.
The reported gaps fall from order one at $K=3$ to $10^{-6}$--$10^{-10}$ at $K=5$;
the largest remaining gaps at $K=5$ occur at $e=4$. Appendix~\ref{app:numerics}
describes the computation.

\begin{table}[htb]
\caption{Numerical estimates of the gaps $a_e-\underline b_K^{(e)}$ and
$\overline b_K^{(e)}-a_e$ (in this order) between the bootstrap bounds on
$\langle u^2\rangle$ and the physical value $a_e$ for the sextic oscillator $h_u$.
See Appendix~\ref{app:numerics} for the computation and solver diagnostics.}
\label{tab:bounds}
\begin{ruledtabular}
\begin{tabular}{lccc}
 & $e=-4$ & $e=0$ & $e=4$\\
$a_e$ & $1.5610933375$ & $0.7169831962$ & $1.2144632215$\\
\hline
$K=3$ & $0.339,\ 0.463$ & $0.504,\ 1.587$ & $1.174,\ 1.270$\\
$K=4$ & $0.0713,\ 3.3\times10^{-5}$ & $0.109,\ 0.508$ & $0.400,\ 0.262$\\
$K=5$ & $1.4\times10^{-9},\ 4.2\times10^{-10}$ & $3.3\times10^{-8},\ 9.1\times10^{-9}$ & $2.8\times10^{-6},\ 7.8\times10^{-7}$\\
\end{tabular}
\end{ruledtabular}
\end{table}

\section{A four-mode example}\label{Sec:Four}

Adding three oscillators to the sextic model allows a second finite-order obstruction: feasible position moments with no positive representing measure. The fixed-observable obstruction and the convergence theorem continue to hold.

\subsection{The exact excited energy shell}

Let $q_u=u$, $q_x=x$, $q_y=y$, $q_z=z$, and set
\begin{equation}\label{eq:mainH}
H=h_u+N,\qquad
N=\frac12\sum_{j=x,y,z}(p_j^2+q_j^2-1).
\end{equation}
The potential is
\begin{equation}\label{eq:main-potential}
V_* =\frac12(u^6-11u^2+x^2+y^2+z^2-3).
\end{equation}
It is smooth and bounded below, so the minimal Schr\"odinger operator is essentially self-adjoint \cite{Simon1973}; confinement gives compact resolvent. Write $\mathcal A=\mathcal A_4$ and $\mathcal F_K(4)=\mathcal F_K^{V_*}(4)$ for the hierarchy of Section~\ref{Sec:Hierarchy}.

Let $\mathcal H_n$ be the eigenspace of $N$ at energy $n\in\mathbb N_0$, with $\dim\mathcal H_n=\binom{n+2}{2}$. The complete shell at $E=4$ is
\begin{equation}\label{eq:exact-shell}
\ker(H-4)
=(\psi_{-4}\otimes\mathcal H_8)
\oplus(\psi_0\otimes\mathcal H_4)
\oplus(\psi_4\otimes\mathcal H_0).
\end{equation}
It has dimension $45+15+1=61$. The proof below establishes completeness by enclosing the two intervening odd levels of $h_u$.

Since $u^2$ acts only on the $u$ factor and the oscillator factors of different summands are orthogonal, every density operator $\rho$ on the shell gives $\operatorname{Tr}(\rho u^2)=w_-a_-+w_0a_0+w_+a_+$, where $w_e\ge0$, with $w_-+w_0+w_+=1$, is the weight of $\rho$ on the summand containing $\psi_e$, and $a_\pm,a_0$ are the values in \eqref{eq:physical-a}. We also prove below that $a_0<a_+<a_-$. Therefore the exact range of $\operatorname{Tr}(\rho u^2)$ over all density operators on the shell is
\begin{equation}\label{eq:physical-u2-range}
a_0\le\operatorname{Tr}(\rho u^2)\le a_-.
\end{equation}
Both endpoints are attained, and convex mixtures realize the whole interval.

\begin{proof}[Completeness of the shell and ordering of the endpoints]
We shall use two simple rational bounds on $\eta$. From \eqref{eq:moment-recurrence},
\begin{align*}
\mu\!\left(\left(u^7-\frac92u^5+6u^3-2u\right)^2\right)&=\frac{4531}{16}\eta-\frac{1081}{8}>0,\\
\mu\!\left(\left(u^8-\frac{11}{2}u^6+8u^4-\frac72u^2\right)^2\right)&=\frac{13791}{32}-\frac{7203}{8}\eta>0.
\end{align*}
Hence
\begin{equation}\label{eq:eta-bounds}
\frac{94}{197}<\eta<\frac{4597}{9604}<\frac12.
\end{equation}

The three even levels $\lambda_0=-4$, $\lambda_2=0$, $\lambda_4=4$ were identified in Section~\ref{Sec:OneD}. The intervening levels $\lambda_1$ and $\lambda_3$ are odd.

For the first odd level, the trial function $u\phi$ gives
\[
\frac{\langle u\phi,h_u u\phi\rangle}{\|u\phi\|_2^2}
=-\frac{3\mu(u^4)}{\mu(u^2)}
=-\frac{3}{2\eta}
<-3,
\]
where the last inequality follows from \eqref{eq:eta-bounds}. Rayleigh--Ritz in the odd subspace therefore yields
\[
-4<\lambda_1<-3.
\]

The second odd level can be localized just as explicitly. Set
\[
P(u)=u-\frac12u^3-\frac{2}{13}u^5,\qquad a=\frac54.
\]
Using \eqref{eq:gauge-rotation} and \eqref{eq:moment-recurrence}, one finds
\[
\frac{1}{16}\mu(P^2)-\mu\!\left(\bigl((\mathscr{D}-a)P\bigr)^2\right)
=\frac{15857-33084\eta}{5408}
>\frac{25435}{6492304}>0,
\]
where \eqref{eq:eta-bounds} was used in the last step. Consequently, by the spectral theorem,
\[
\operatorname{dist}\!\left(a,\operatorname{spec}(h_u|_{\mathrm{odd}})\right)^2
\le
\frac{\|(h_u-a)P\phi\|_2^2}{\|P\phi\|_2^2}
<\frac{1}{16}.
\]
Hence the odd spectrum contains an eigenvalue in $(1,3/2)$. Since $\lambda_1<-3$ and $\lambda_5>\lambda_4=4$, this eigenvalue must be $\lambda_3$. We have therefore established
\begin{equation}\label{eq:odd-level-bounds}
-4<\lambda_1<-3,\qquad 1<\lambda_3<\frac32.
\end{equation}

A level $\lambda$ of $h_u$ contributes to total energy $4$ exactly when $4-\lambda\in\mathbb N_0$. The bounds \eqref{eq:odd-level-bounds} exclude the two odd levels, and all levels above $\lambda_4$ exceed $4$. Only $\lambda=-4,0,4$ contribute, with oscillator energies $8,4,0$, proving \eqref{eq:exact-shell}.

Finally, the differences of the values in \eqref{eq:physical-a} are
\[
a_+-a_0=\frac{12\eta^2+7\eta-6}{4(1-2\eta)},\qquad
a_--a_+=\frac{3-13\eta^2}{1-4\eta^2}.
\]
The first numerator is increasing for $\eta>0$ and is positive at $94/197$. The second is positive at $4597/9604$. Equation~\eqref{eq:eta-bounds} therefore proves $a_0<a_+<a_-$ and the physical interval \eqref{eq:physical-u2-range}.
\end{proof}

\subsection{Persistence of the fixed-observable obstruction}

Let $\sigma_n$ be the normalized orthogonal projector onto $\mathcal H_n$. Tensor the formal functionals of Section~\ref{Sec:OneD} with these oscillator states:
\begin{equation}\label{eq:formal-shell-functional}
\mathcal B_{e,t}=\omega_{e,t}\otimes\sigma_{4-e},\qquad e\in\{-4,0,4\},
\end{equation}
where a density operator in a tensor product of functionals denotes its expectation functional. They are normalized and Hermitian, and satisfy
\begin{equation}\label{eq:formal-shell-identities}
\mathcal B_{e,t}(A(H-4))
=\mathcal B_{e,t}((H-4)A)
=\mathcal B_{e,t}([H,A])=0
\quad(A\in\mathcal A).
\end{equation}
For each finite $K$, their moment matrices are positive semidefinite when $t$ is sufficiently close to $\eta$. To see this, apply $A$ to each polynomial-times-Gaussian basis vector of $\psi_e\otimes\mathcal H_{4-e}$. After conjugating by
\[
\Gamma=\exp[-u^4/4-(x^2+y^2+z^2)/2],
\]
expand the resulting polynomial in oscillator Hermite polynomials. The coefficients are polynomials in $u$ of degree at most $3K+4$ when $\deg A\le K$. Integrating the oscillator variables expresses $\mathcal B_{e,t}(A^\dagger A)$ as a positive combination of terms $\mu_t(|f(u)|^2)/\mu_t(P_e^2)$. Thus $J_{3K+4}(t)\succ0$ suffices.

Let
\begin{equation}\label{eq:shell-annihilator}
\mathcal I_4=\{A\in\mathcal A:A\psi=0\text{ for every }\psi\in\ker(H-4)\}.
\end{equation}

\begin{theorem}[Finite constraints on the four-mode shell]\label{thm:finite-strengthening}
Let $\mathcal W\subset\mathcal A$ be a finite-dimensional $*$-closed space containing $1,u^2$, and let $G_1,\ldots,G_m$, $m\ge1$, be polynomial operators with $G_i^\dagger G_j\in\mathcal W$. There are normalized Hermitian functionals $L^\pm$ on $\mathcal W$ satisfying
\[
\bigl(L^\pm(G_i^\dagger G_j)\bigr)_{i,j}\succeq0,\qquad
L^\pm\bigl(\mathcal W\cap(\mathcal I_4+\mathcal I_4^\dagger)\bigr)=0,
\]
and $L^-(u^2)<a_0$, $L^+(u^2)>a_-$. In particular, for every $K\ge3$,
\begin{equation}\label{eq:strict-u2-bounds}
\inf_{L\in\mathcal F_K(4)}L(u^2)<a_0,\qquad
\sup_{L\in\mathcal F_K(4)}L(u^2)>a_-.
\end{equation}
Writing $\Sigma_{\mathcal A}=\{\sum_j A_j^\dagger A_j\}$ for finite sums, neither endpoint has a finite polynomial-square certificate:
\begin{equation}\label{eq:no-finite-certificate}
u^2-a_0\notin\Sigma_{\mathcal A}+\mathcal I_4+\mathcal I_4^\dagger,\qquad
a_--u^2\notin\Sigma_{\mathcal A}+\mathcal I_4+\mathcal I_4^\dagger.
\end{equation}
\end{theorem}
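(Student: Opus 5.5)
The plan is to reuse the deformed tensor functionals $\mathcal B_{e,t}$ of \eqref{eq:formal-shell-functional}. For $L^-$ take $\mathcal B_{0,t^-}|_{\mathcal W}$ with $t^-<\eta$; for $L^+$ take $\mathcal B_{-4,t^+}|_{\mathcal W}$ with $t^+>\eta$. Because $u^2$ acts only on the $u$ factor and $\sigma_{4-e}$ is normalized, $\mathcal B_{e,t}(u^2)=\omega_{e,t}(u^2)=a_e(t)$. Equation~\eqref{eq:a-deformed} then gives
\[
\mathcal B_{0,t^-}(u^2)=a_0(t^-)<a_0,\qquad \mathcal B_{-4,t^+}(u^2)=a_-(t^+)>a_-,
\]
since both derivatives are positive near $\eta$. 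Normalization and Hermiticity were already noted after \eqref{eq:formal-shell-functional}.

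\textbf{Gram positivity.} Set $D=\max_i\deg G_i$. The argument preceding the theorem shows that $\mathcal B_{e,t}(A^\dagger A)\ge0$ for all $\deg A\le D$ whenever $J_{3D+4}(t)\succ0$. This holds on a neighborhood of $\eta$, since $J_{3D+4}$ is affine in $t$ and positive definite at $\eta$. The Gram matrix $\bigl(L(G_i^\dagger G_j)\bigr)$ is the compression of this form to $\operatorname{span}\{G_i\}$, so it is positive semidefinite. Pick $t^\pm$ in this neighborhood.

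\textbf{Annihilator.} This is the main obstacle, and I would handle it as in Theorem~\ref{thm:oneD-persistence}. Let $A\in\mathcal I_4$. Then $A$ kills every vector $\psi_e\otimes\chi$ with $\chi\in\mathcal H_{4-e}$, because these lie in $\ker(H-4)$ by \eqref{eq:exact-shell}. Write $\chi=h(x,y,z)e^{-(x^2+y^2+z^2)/2}$. After conjugating by $\Gamma$, the identity $A(P_e h\,\Gamma)=0$ becomes $\widetilde A(P_eh)=0$, a polynomial identity because $\Gamma$ never vanishes. This identity does not involve $t$.

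Now expand $\mathcal B_{e,t}(A)$ over an orthonormal basis $\{\chi_k\}$ of $\mathcal H_{4-e}$:
\[
\mathcal B_{e,t}(A)=\frac{1}{\dim\mathcal H_{4-e}}\sum_k\frac{\text{formal pairing of }P_e\chi_k\text{ with }\widetilde A(P_e\chi_k)}{\mu_t(P_e^2)}.
\]
In each term, the oscillator variables are integrated genuinely against Gaussians and the $u$ variable formally with $\mu_t$. Since $\widetilde A(P_e\chi_k)=0$, every term is the pairing of a vector with zero, so $\mathcal B_{e,t}(A)=0$. Hermiticity then gives $\mathcal B_{e,t}(A^\dagger)=0$. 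The functionals therefore vanish on $\mathcal I_4+\mathcal I_4^\dagger$ for every $t$, and in particular on its intersection with $\mathcal W$. The one point to check is that the tensor pairing really is a bilinear pairing of two test vectors, so that it vanishes identically when one argument is zero. This follows from how $\mathcal B_{e,t}$ is defined, as $\mu_t\otimes(\text{Gaussian integral})$ of $\overline{f}\,\widetilde A f$.

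\textbf{Hierarchy bounds and certificates.} For \eqref{eq:strict-u2-bounds}, restrict $\mathcal B_{e,t^\pm}$ to $\mathcal A_{4,\le2K}$ with $D=K$. Positivity of $M_K$ follows as above. The eigenstate and stationarity relations follow from \eqref{eq:formal-shell-identities}. Alternatively, they lie in $\mathcal I_4+\mathcal I_4^\dagger$, as in \eqref{eq:stationarity-ideal}, because $H-4$ annihilates the shell.

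For \eqref{eq:no-finite-certificate}, argue exactly as in Corollary~\ref{cor:oneD-no-certificate}. Suppose $u^2-a_0$ were a finite sum $\sum A_i^\dagger A_i$ plus an element of $\mathcal I_4+\mathcal I_4^\dagger$. Choose $\mathcal W$ containing $1$, $u^2$, the annihilator terms and all products $A_i^\dagger A_j$, and take $G_i=A_i$. Then $L^-$ is nonnegative on the right-hand side but $L^-(u^2-a_0)<0$, a contradiction. The case $a_--u^2$ is identical, using $L^+$.
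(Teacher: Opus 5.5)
Your proposal is correct and follows the paper's proof essentially step for step: the same functionals $\mathcal B_{0,t^-}$ and $\mathcal B_{-4,t^+}$, Gram positivity from $J_{3D+4}(t)\succ0$ near $\eta$, and vanishing on $\mathcal I_4$ from the $t$-independent polynomial identity obtained by conjugating with $\Gamma$, extended to $\mathcal I_4^\dagger$ by Hermiticity. The hierarchy constraints come from \eqref{eq:formal-shell-identities}, and the certificate exclusion follows as in Corollary~\ref{cor:oneD-no-certificate}; there, $\mathcal W$ should also be enlarged to be $*$-closed, and you should set $A_1=1$ when there are no squares, which is harmless.
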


\begin{proof}
Set $D=\max_i\deg G_i$. Since $\sum_{i,j}\bar c_ic_j\mathcal B_{e,t}(G_i^\dagger G_j)=\mathcal B_{e,t}(A^\dagger A)$ with $A=\sum_ic_iG_i$ of degree at most $D$, the preceding argument shows that the Gram matrix is positive semidefinite whenever $J_{3D+4}(t)\succ0$, hence for $t$ sufficiently close to $\eta$. If $A\in\mathcal I_4$, then $\widehat A=\Gamma^{-1}A\Gamma$ annihilates each seed polynomial $P_e(u)H_{\mathbf n}(x,y,z)$ with $|\mathbf n|=4-e$. This is a polynomial identity independent of $t$, so $\mathcal B_{e,t}$ vanishes on $\mathcal I_4$ and, by Hermiticity, on $\mathcal I_4^\dagger$. Choose $L^-=\mathcal B_{0,t^-}|_{\mathcal W}$ and $L^+=\mathcal B_{-4,t^+}|_{\mathcal W}$ with $t^-<\eta<t^+$ sufficiently close to $\eta$. Their $u^2$ values are $a_0(t^-)$ and $a_-(t^+)$, which give the strict inequalities. Equation~\eqref{eq:formal-shell-identities} supplies the standard hierarchy constraints. The certificate statement follows by applying these functionals to any proposed finite identity, as in Corollary~\ref{cor:oneD-no-certificate}.
\end{proof}

All stationarity polynomials already lie in $\mathcal I_4+\mathcal I_4^\dagger$, by the identity \eqref{eq:stationarity-ideal} with $h_u-e$ replaced by $H-4$. The theorem also covers finite localizing tests with Hermitian weights $g$ satisfying $\langle f\Gamma,gf\Gamma\rangle>0$ for every nonzero polynomial $f$ in four variables: apply the finite-image continuity argument of Proposition~\ref{prop:localizing} to the shell seed polynomials. In particular, every nonzero nonnegative position polynomial has this property.

The optimization values in \eqref{eq:strict-u2-bounds} are finite. Since $N$ is a sum of oscillator number operators, $L(N)\ge0$. The identity used in \eqref{eq:oneD-bounded} and $L(H)=4$ then give
\begin{equation}\label{eq:u2-uniform-finite}
0\le L(u^2)\le24\qquad(L\in\mathcal F_K(4),\ K\ge3).
\end{equation}

\subsection{Position moments without a representing measure}

We now construct feasible moments whose position restriction cannot be represented by a positive Borel measure on $\mathbb R^4$. Unlike the fixed-observable deformation, this construction changes only moments at the highest retained degree.

We begin with the Robinson sextic
\begin{equation}\label{eq:Robinson}
R(X,Y,Z)=X^6+Y^6+Z^6-X^4Y^2-X^4Z^2-Y^4X^2-Y^4Z^2-Z^4X^2-Z^4Y^2+3X^2Y^2Z^2,
\end{equation}
a standard example of a nonnegative polynomial that is not a sum of polynomial squares \cite{Robinson1973,Reznick2000}. Its nonnegativity follows directly by setting $a=X^2$, $b=Y^2$, $c=Z^2$ and, by symmetry, assuming $a\ge b\ge c\ge0$:
\begin{equation}\label{eq:Robinson-positive}
R=(a-b)^2(a+b-c)+c(a-c)(b-c)\ge0.
\end{equation}
On the homogeneous sextics define a real linear functional $\lambda$ by permutation symmetry, by setting every monomial with an odd exponent to zero, and by
\begin{equation}\label{eq:lambda-values}
\lambda(X^6)=\frac{37}{20},\qquad \lambda(X^4Y^2)=1,\qquad \lambda(X^2Y^2Z^2)=\frac{1}{10}.
\end{equation}
On homogeneous cubics the quadratic form $Q\mapsto\lambda(Q^2)$ decomposes according to exponent parity. For example, on the ordered basis $\{X^3,XY^2,XZ^2\}$ the block is
\[
\begin{pmatrix}
37/20&1&1\\
1&1&1/10\\
1&1/10&1
\end{pmatrix}.
\]
The bases $\{Y^3,YX^2,YZ^2\}$ and $\{Z^3,ZX^2,ZY^2\}$ give the same block by permutation symmetry, while $XYZ$ gives the scalar block $1/10$. The successive principal minors of the $3\times3$ block are $37/20$, $17/20$ and $63/2000$, so the form is positive definite. On the other hand,
\begin{equation}\label{eq:lambda-R}
\lambda(R)=3\frac{37}{20}-6+\frac{3}{10}=-\frac{3}{20}.
\end{equation}

Define a functional $\ell_3$ on polynomials in $X,Y$ of degree at most six by
\begin{equation}\label{eq:ell3-definition}
\ell_3(X^aY^b)=\lambda(X^aY^bZ^{6-a-b}),\qquad a+b\le6.
\end{equation}
If $q(X,Y)$ has degree at most three and $Q(X,Y,Z)=Z^3q(X/Z,Y/Z)$ is its homogenization, then $\ell_3(q^2)=\lambda(Q^2)$. For a functional $\ell$ on $\mathbb R[X,Y]$, let $M_K(\ell)$ denote the commutative moment matrix indexed by monomials $X^{\alpha_1}Y^{\alpha_2}$ of degree at most $K$, with entries
\[
M_K(\ell)_{\alpha\beta}=\ell\!\left(X^{\alpha_1+\beta_1}Y^{\alpha_2+\beta_2}\right).
\]
Hence
\begin{equation}\label{eq:ell3-properties}
M_3(\ell_3)\succ0,\qquad \ell_3(R(X,Y,1))=-\frac{3}{20}.
\end{equation}
Thus $\ell_3$ is strictly positive on squares yet negative on the nonnegative polynomial $R(X,Y,1)$, i.e.\ it separates $R$ from the sums of squares. This functional can be extended to arbitrary order without losing either property.

 Suppose $\ell_k$ is defined through degree $2k$ and $M_k(\ell_k)\succ0$. After ordering the monomials of degree at most $k$ before those of degree $k+1$, write
\[
M_{k+1}=\begin{pmatrix}A&B\\B^{\mathsf T}&C\end{pmatrix},\qquad A=M_k(\ell_k)\succ0.
\]
Keep all old moments fixed, set the new moments of degree $2k+1$ to zero, and choose the moments of degree $2k+2$ to be a sufficiently large positive multiple of the corresponding moments of the centered standard Gaussian probability measure on $\mathbb{R}^2$. Then $C=tG$ with $G\succ0$, and for sufficiently large $t$ the Schur complement $tG-B^{\mathsf T}A^{-1}B$ is positive definite. Iteration therefore gives, for every $K\ge3$, an extension $\ell_K$ of $\ell_3$ satisfying
\begin{equation}\label{eq:ellK-positive}
M_K(\ell_K)\succ0.
\end{equation}
No flatness condition is imposed here: positivity of a truncated moment matrix alone does not imply the existence of a representing measure, in contrast with flat-extension criteria \cite{CurtoFialkow1998}.

We now lift this finite-dimensional separation to the Weyl moment problem. For a polynomial operator $A$, write it in fully symmetrized Weyl order and identify the resulting coefficients with its commuting Weyl symbol. Let $A_{2K}^{\mathrm{osc}}(q_{\mathrm{osc}},p_{\mathrm{osc}})$ be the homogeneous degree-$2K$ part of that symbol that depends only on
\[
q_{\mathrm{osc}}:=(x,y,z),\qquad p_{\mathrm{osc}}:=(p_x,p_y,p_z),
\]
with no dependence on $u$ or $p_u$. Reordering between PBW and Weyl form changes only lower-degree terms: every use of $[q_j,p_j]=\mathrm{i}$ replaces a degree-two pair by a scalar and therefore lowers total degree by two. Equivalently, the associated graded Weyl algebra is the ordinary commutative polynomial algebra. Thus the top homogeneous symbol is independent of the ordering convention and is multiplicative. In particular, for $\deg A\le K$, if $a_K^{\mathrm{osc}}$ denotes the degree-$K$ purely oscillator part of the top symbol of $A$, then
\[
(A^\dagger A)_{2K}^{\mathrm{osc}}=\overline{a_K^{\mathrm{osc}}}\,a_K^{\mathrm{osc}}.
\]
Set
\[
a_K=\int_0^{2\pi}\frac{\cos^{2K}\theta}{2\pi}\,d\theta=4^{-K}\binom{2K}{K},
\]
and define
\begin{equation}\label{eq:tail-functional}
T_K(A)=\frac{1}{a_K}\int_0^{2\pi}\frac{d\theta}{2\pi}\,
\ell_K\!\left(A_{2K}^{\mathrm{osc}}(\xi\cos\theta,-\xi\sin\theta)\big|_{\xi_z=1}\right),
\end{equation}
where $\xi=(\xi_x,\xi_y,\xi_z)$ and $\ell_K$ acts on the remaining variables $\xi_x,\xi_y$. The functional $T_K$ vanishes on every operator of degree below $2K$. Since $\ell_K$ is real and adjunction complex-conjugates the top Weyl symbol, $T_K(A^\dagger)=\overline{T_K(A)}$. Moreover,
\begin{equation}\label{eq:tail-positivity}
T_K(A^\dagger A)\ge0,\qquad \deg A\le K.
\end{equation}
Indeed, the displayed top-symbol identity shows that only $a_K^{\mathrm{osc}}$ contributes. After the phase rotation and dehomogenization in \eqref{eq:tail-functional}, the integrand is $\ell_K(|b_\theta|^2)$ for a polynomial $b_\theta$ of degree at most $K$. Writing $b_\theta=b_{\theta,1}+\mathrm{i}b_{\theta,2}$ reduces this to $\ell_K(b_{\theta,1}^2)+\ell_K(b_{\theta,2}^2)\ge0$ by \eqref{eq:ellK-positive}.

The phase average also gives the required stationarity. Let $\widetilde T_K$ be the functional on the full oscillator Weyl algebra obtained from the same formula by retaining the homogeneous Weyl-symbol component of degree $2K$. 
Conjugation by $e^{\mathrm{i}sN}$ rotates every oscillator pair $(q_j,p_j)$ in phase space by the same angle $s$; since $N$ is quadratic, it acts on Weyl symbols exactly by this rotation of their arguments, and it preserves degree. The $\theta$-average is invariant under such rotations, so $\widetilde T_K(e^{\mathrm{i}sN}Be^{-\mathrm{i}sN})=\widetilde T_K(B)$ for all $s$, and differentiating at $s=0$ gives
\[
\widetilde T_K([N,B])=0
\]
for every oscillator polynomial $B$. Choose any normalized eigenstate functional $\omega_e$ of $h_u$. The tensor-product functional $\omega_e\otimes\widetilde T_K$ is stationary for $h_u+N$: the $h_u$ commutator vanishes under $\omega_e$, and the $N$ commutator vanishes under $\widetilde T_K$. Its restriction to $\mathcal A_{4,\le2K}$ is exactly $T_K$. Indeed, a monomial containing $u$ or $p_u$ has oscillator degree strictly below $2K$ and is annihilated by $\widetilde T_K$; a purely oscillator monomial of degree below $2K$ is annihilated for the same reason; and on purely oscillator degree-$2K$ terms the two definitions coincide, with $\omega_e(1)=1$. Consequently,
\begin{equation}\label{eq:tail-stationarity}
T_K([H,O])=0
\end{equation}
for every $O\in\mathcal A$ whose PBW-reduced commutator lies in $\mathcal A_{4,\le2K}$. Finally, if $\deg O\le2K-6$, then
\begin{equation}\label{eq:tail-eigenstate}
T_K(O(H-4))=0.
\end{equation}
All contributions involving terms of degree at most two in $H$ lie below degree $2K$, while the only possible degree-$2K$ contribution from $u^6$ still contains a $u$ variable and is removed by the purely oscillator projection.

The density operator
\begin{equation}\label{eq:background-state}
\rho_*=\frac13\left(|\psi_{-4}\rangle\langle\psi_{-4}|\otimes\sigma_8+|\psi_0\rangle\langle\psi_0|\otimes\sigma_4+|\psi_4\rangle\langle\psi_4|\otimes\sigma_0\right)
\end{equation}
is supported on $\ker(H-4)$. Denote its moment functional by $B_\eta$.

\begin{theorem}[Nonphysical feasible moments at every finite order]\label{thm:allK}
For every $K\ge3$, there exists $L_K\in\mathcal F_K(4)$ whose position moments admit no positive representing measure. In particular, $L_K$ cannot be the truncation of a Schr\"odinger quantum state.
\end{theorem}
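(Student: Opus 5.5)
The plan is to take $L_K=B_\eta+\varepsilon T_K$ restricted to $\mathcal A_{4,\le2K}$ for a small $\varepsilon>0$, perhaps after perturbing $B_\eta$ so that its moment matrix becomes strictly positive definite. Feasibility is mostly inherited. Normalization holds because $T_K(1)=0$, since $T_K$ vanishes below degree $2K$. Hermiticity and the stationarity and eigenstate relations are linear; they hold for $B_\eta$ because $\rho_*$ is supported on $\ker(H-4)$, and for $T_K$ by \eqref{eq:tail-stationarity} and \eqref{eq:tail-eigenstate}. For positivity, $M_K(B_\eta)+\varepsilon M_K(T_K)\succeq0$ follows once $M_K(B_\eta)\succ0$, since \eqref{eq:tail-positivity} only gives semidefiniteness of $M_K(T_K)$.

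\emph{Step 1: strict positivity of the background.} I would show that $B_\eta(A^\dagger A)=\operatorname{Tr}(\rho_* A^\dagger A)>0$ for every nonzero $A$ of degree at most $K$. Equivalently, no nonzero polynomial operator annihilates the three summands of $\rho_*$ simultaneously. This is the main obstacle, because $A$ could lie in the annihilator of a finite-dimensional space of vectors. For example, $(h_u+4)(H-4)$-type combinations annihilate everything. So $M_K(B_\eta)$ is genuinely singular on the eigenstate ideal, and strict definiteness fails.

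I would therefore modify the argument. Only positivity of the perturbed matrix is needed, and $T_K$ must be checked against the kernel of $M_K(B_\eta)$. That kernel consists of $A$ with $A\psi=0$ on the support. The key observation is that $T_K(A^\dagger A)$ depends only on the top oscillator symbol $a_K^{\mathrm{osc}}$, while $T_K(A^\dagger C)$ depends only on the top symbols of both. The plan is to show that the support $\rho_*$ has full oscillator content through degree $K$, because $\sigma_8,\sigma_4,\sigma_0$ are maximally mixed on whole Hermite shells. It would follow that any kernel element $A$ of degree at most $K$ has vanishing purely oscillator degree-$K$ top symbol. This is plausible, since the leading oscillator symbol acting on a Gaussian times a Hermite polynomial of the top shell produces nonzero highest-degree terms that nothing else can cancel. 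If the top symbol of every kernel vector vanishes, then $T_K$ vanishes on $\ker M_K(B_\eta)\times\mathcal A_{\le K}$. A Schur-complement argument, with $M_K(B_\eta)$ positive definite on the complement, then gives $M_K(B_\eta)+\varepsilon M_K(T_K)\succeq0$ for small $\varepsilon$. As a fallback, I would replace $B_\eta$ by $\mathcal B_{e,t}$-type formal functionals. These satisfy the same identities, but I would not rely on them for definiteness.

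\emph{Step 2: no representing measure.} Suppose $\nu$ represents the position restriction of $L_K$ on polynomials of degree at most $2K$ in $(u,x,y,z)$. I would test against the degree-$2K$ nonnegative polynomial $F(x,y,z)=R(x,y,z)\,(x^2+y^2+z^2)^{K-3}$, which is homogeneous and nonnegative by \eqref{eq:Robinson-positive}. Then $\int F\,d\nu\ge0$. On the other side, $L_K(F)=B_\eta(F)+\varepsilon T_K(F)$. Evaluating $T_K(F)$ via \eqref{eq:tail-functional} gives $\frac{1}{a_K}\int\cos^{2K}\theta\,\frac{d\theta}{2\pi}\,\ell_K(F(\xi_x,\xi_y,1))=\ell_K(F(\cdot,\cdot,1))$. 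I would choose the extension in the construction of $\ell_K$ so that this value is negative, possibly by choosing the test polynomial as $R(X,Y,1)$ times a square multiplier with controlled $\ell_K$ value. The cleanest choice is to use the test polynomial $R$ at degree $6$ only when $K=3$, and otherwise to rescale. Taking the scaling parameter large in the Schur-complement construction is the wrong direction, so I would instead choose $F$ adapted to $\ell_K$. The sign claim is therefore $T_K(F)<0$, with magnitude independent of $\varepsilon$.

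The remaining difficulty is the background term $B_\eta(F)>0$. I would remove it by scaling: take $L=B_\eta+\varepsilon T_K$ with $\varepsilon$ fixed, and exploit that $T_K(A^\dagger A)$ is controlled on the complement of the kernel by the Schur argument. To get a contradiction, I need $B_\eta(F)+\varepsilon T_K(F)<0$, which requires $\varepsilon$ large. I would therefore expect the decisive step to be proving that positivity survives for large $\varepsilon$. This can be done by showing that $M_K(T_K)$ restricted to the top-symbol quotient is strictly positive, using \eqref{eq:ellK-positive}, and that the background dominates the lower-degree directions. This reduces to a block argument: top block strictly positive from $T_K$, lower block strictly positive from $B_\eta$, with off-diagonal terms coming only from $B_\eta$ and hence bounded independently of $\varepsilon$. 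Since $\nu$ cannot exist, $L_K$ is not the truncation of a Schr\"odinger state, because any such state would yield $\nu=\mu_q$.
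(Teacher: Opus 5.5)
\textbf{Assessment.} Your architecture is the paper's: a physical background $B_\eta$ plus a positive multiple of the tail $T_K$. However, the positivity step is argued from a false premise, and the witness step fails outright for $K\ge4$.

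\textbf{Positivity.} Equation \eqref{eq:tail-positivity} says exactly that $M_K(T_K)\succeq0$. Also $M_K(B_\eta)\succeq0$, because $B_\eta$ comes from a density operator. A sum of positive semidefinite matrices is positive semidefinite, so $M_K(B_\eta)+\tau M_K(T_K)\succeq0$ for every $\tau\ge0$, however large. Strict definiteness of $M_K(B_\eta)$, the Schur complement and the block argument are all unnecessary. The claims you propose for that detour are in fact false once $K\ge6$:
\begin{itemize}
\item $H-4$ lies in $\ker M_K(B_\eta)$ and has no purely oscillator top symbol, so $B_\eta$ is not strictly positive on your ``lower block''.
\item For even $K$, the element $N^{(K-6)/2}N(N-4)(N-8)$ annihilates the shell, yet its top symbol is a nonzero purely oscillator polynomial.
\end{itemize}

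\textbf{The witness (the decisive gap).} For $F=R\,(x^2+y^2+z^2)^{K-3}$ you get
\[
T_K(F)=\ell_K\bigl(R(X,Y,1)(1+X^2+Y^2)^{K-3}\bigr).
\]
This involves moments of $\ell_K$ above degree six, which the inductive extension leaves uncontrolled. Worse, for $K\ge4$ this value is nonnegative for every $\ell_K$ with $M_K(\ell_K)\succeq0$:
\begin{itemize}
\item By Hilbert's theorem on ternary forms, $qR$ is a sum of squares for some nonzero positive semidefinite quadratic form $q$.
\item Since $R$ is invariant under signed coordinate permutations $g$, each $(q\circ g)R$ is again a sum of squares.
\item Averaging over $g$ replaces $q$ by a positive multiple of $x^2+y^2+z^2$.
\item Hence $F$ is a sum of squares of degree-$K$ forms. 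Its dehomogenization at $z=1$ is then a sum of squares of polynomials of degree at most $K$.
\end{itemize}
Consequently $L_K(F)=B_\eta(F)+\tau T_K(F)\ge0$ for every $\tau$. Your $F$ works only at $K=3$, where it equals $R$.

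\textbf{The missing idea.} Pad $R$ with powers of the dehomogenizing variable. For $f_K=z^{2K-6}R(x,y,z)$, the dehomogenization $f_K(\xi_x,\xi_y,1)=R(\xi_x,\xi_y,1)$ has degree six. Therefore $T_K(f_K)=\ell_3(R(X,Y,1))=-3/20$, independently of how $\ell_K$ was extended. With $\tau_K=\frac{20}{3}B_\eta(f_K)+1$ one gets $L_K(f_K)=-3/20<0$ against the pointwise nonnegative $f_K$. This is the paper's proof.
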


\begin{proof}
Set
\begin{equation}\label{eq:finite-witness}
f_K=z^{2K-6}R(x,y,z),\qquad c_K=B_\eta(f_K),\qquad \tau_K=\frac{20}{3}c_K+1,
\end{equation}
and define $L_K=B_\eta+\tau_KT_K$. Since $f_K\ge0$, we have $c_K\ge0$ and hence $\tau_K>0$. Equations \eqref{eq:tail-positivity}--\eqref{eq:tail-eigenstate} show that $L_K$ satisfies positivity, stationarity and all eigenstate relations defining $\mathcal F_K(4)$, while $T_K(1)=0$ preserves normalization. Since $f_K$ is homogeneous of degree $2K$, the phase average contributes the factor $a_K$, and because $\ell_K$ extends $\ell_3$,
\[
T_K(f_K)=\ell_K(R(X,Y,1))=-\frac{3}{20}.
\]
Therefore
\[
L_K(f_K)=c_K-\frac{3}{20}\left(\frac{20}{3}c_K+1\right)=-\frac{3}{20}.
\]
A positive measure on the commuting position variables cannot assign a negative expectation to the pointwise nonnegative polynomial $f_K$. Hence the position moments of $L_K$ have no positive representing measure.
\end{proof}

The construction is order dependent: $f_K$, $\ell_K$ and $T_K$ change with $K$, while every moment of degree below $2K$ agrees with the physical background $B_\eta$. In particular, $L_K$ converges moment by moment to that physical functional, consistently with Theorem~\ref{thm:reconstruction}.

\begin{proposition}[Interior nonphysical points at level three]
\label{prop:interior-tail}
The functional $L_3$ constructed in Theorem~\ref{thm:allK} satisfies
$M_3(L_3)\succ0$. Moreover, within the real affine space defined by
normalization, Hermiticity and the linear identities of
$\mathcal F_3(4)$, there is a nonempty relatively open set of feasible
functionals whose position moments admit no positive representing
measure.
\end{proposition}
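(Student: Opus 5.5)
The plan has two parts: first show $M_3(L_3)\succ0$, then perturb $L_3$ inside the affine space.

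\textbf{Strict positivity of $M_3(L_3)$.} For $K=3$ we have $L_3=B_\eta+\tau_3T_3$ with $\tau_3>0$. The moment matrix $M_3(B_\eta)$ is the Gram matrix of the vectors $v_a\Psi$ for the physical state $\rho_*$. I would first show that it is positive semidefinite and that its kernel consists exactly of those degree-$\le3$ combinations $A=\sum_a c_av_a$ with $A\psi=0$ on the support of $\rho_*$. I would then show that this kernel is trivial, i.e.\ no nonzero Weyl polynomial of degree at most three annihilates all of $\psi_{-4}\otimes\mathcal H_8$. Conjugating by $\Gamma$ turns this into the requirement that a nonzero differential operator $\widehat A$, with polynomial coefficients of bounded degree, cannot kill the $45$-dimensional space of seed polynomials $P_-(u)H_{\mathbf n}(x,y,z)$ with $|\mathbf n|=8$.

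A cleaner route is to use a vector that separates all degree-three operators. For a single Gaussian-times-polynomial vector $f\Gamma$ with $f\neq0$, the map $A\mapsto A(f\Gamma)$ is injective on $\mathcal A_4$. This holds because the Weyl algebra acts faithfully on polynomial-times-Gaussian functions: the leading symbol of $\widehat A f$ in large $|q|$ is nonzero. This gives $M_3(\omega)\succ0$ already for a pure state. Hence $M_3(B_\eta)\succ0$. Adding $\tau_3T_3$, which satisfies $M_3(T_3)\succeq0$ by \eqref{eq:tail-positivity}, preserves strict positivity.

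\textbf{Relatively open set.} Let $\mathcal S$ be the real affine space cut out by normalization, Hermiticity and the linear identities of $\mathcal F_3(4)$; by construction $L_3\in\mathcal S$. The map $L\mapsto M_3(L)$ is affine and continuous on the finite-dimensional space $\mathcal S$, and the positive definite cone is open. Therefore there is a neighborhood $U$ of $L_3$ in $\mathcal S$ on which $M_3(L)\succ0$, so every element of $U$ is feasible.

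The map $L\mapsto L(f_3)$, with $f_3=R(x,y,z)$, is also continuous, and $L_3(f_3)=-3/20<0$. Shrinking $U$ so that $L(f_3)<0$ throughout, every $L\in U$ assigns a negative value to a pointwise nonnegative position polynomial. As in the proof of Theorem~\ref{thm:allK}, no such $L$ has a positive representing measure for its position moments.

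\textbf{Main obstacle.} The delicate step is the injectivity argument behind $M_3(B_\eta)\succ0$. I would verify it by the leading-order argument in one variable at a time. Let $A=\sum_\beta a_\beta(q)p^\beta$ act on $f\Gamma$, and write $\widehat A f=\sum_\beta a_\beta(q)\,(\text{gauged }p)^\beta f$. Each gauged momentum $-\mathrm i(\partial_u-u^3)$ or $-\mathrm i(\partial_j-q_j)$ raises the polynomial degree by its multiplicative part. The top-degree term is then $a_\beta(q)\prod(\mathrm i u^3)^{\beta_u}(\mathrm i q_j)^{\beta_j}f$, taken over the maximal weighted $\beta$. Choosing a weighted grading (weight $3$ for $p_u$, $1$ for the others, $1$ for positions) isolates a nonzero top part, so $\widehat Af\neq0$ whenever $A\neq0$. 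It then remains to confirm that distinct $(a_\beta,\beta)$ cannot cancel at top weight. This reduces to the fact that the commutative top parts $a_\beta(q)\,u^{3\beta_u}q^{\beta_{\mathrm{osc}}}$ are polynomial multiples that sum to a nonzero polynomial. If $\beta\mapsto$ monomial factor is not injective, these top parts can cancel; in that case I would fall back on the faithfulness of the Schrödinger representation on the dense set of polynomial-times-Gaussian vectors.
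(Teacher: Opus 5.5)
\textbf{The proposal has a genuine gap in the proof of $M_3(B_\eta)\succ0$, which is the substantive step; the rest is fine.} Your second half (openness of the positive definite cone, and continuity of $L\mapsto L(R)$ on the finite-dimensional affine space) is correct and matches the paper.

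Your key claim, that $A\mapsto A(f\Gamma)$ is injective on $\mathcal A_4$ for a single nonzero $f$, is false. For example, $(x+\mathrm{i}p_x)\Gamma=(x+\partial_x)\Gamma=0$ and $(u^3+\mathrm{i}p_u)\phi=0$. For the shell vector $\psi_{-4}\otimes|8,0,0\rangle$, the degree-one operator $y+\mathrm{i}p_y$ is an annihilator, so $M_3$ of a pure shell state is singular. These are exactly the top-weight cancellations you flag at the end: the gauged $p_j$ has leading part $\mathrm{i}q_j$ and the gauged $p_u$ has leading part $\mathrm{i}u^3$.

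The fallback to faithfulness of the Schr\"odinger representation does not repair this. Faithfulness only says that a nonzero $A$ acts nontrivially on \emph{some} vector, not on a prescribed vector or a prescribed finite-dimensional subspace. Your earlier reduction also fails: the single summand $\psi_{-4}\otimes\mathcal H_8$ is annihilated by the nonzero degree-two operator $N-8$.

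\textbf{What is missing is the use of the full rank of the mixed state $\rho_*$ on the shell, together with two distinct oscillator levels.} The paper's argument runs as follows.
\begin{enumerate}
\item $B_\eta(A^\dagger A)=0$ forces $A$ to annihilate all of $\ker(H-4)$, because $\rho_*$ is strictly positive there.
\item Write $A=\sum_{a+b\le3}u^ap_u^b\otimes B_{ab}$. The ten functions $\phi^{-1}u^ap_u^b(P_e\phi)$ are linearly independent for $e=-4,0$; this follows from distinct degrees after trading $p_u$ for $p_u-\mathrm{i}u^3$. Hence each oscillator coefficient $B_{ab}$ annihilates both $\mathcal H_8$ and $\mathcal H_4$.
\item In the Bargmann--Fock picture, split $B_{ab}$ by the change $k$ of oscillator number. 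The $k=0$ part has the form $F_1(w)\cdot\partial+c$, and applying it to $w_j^n$ gives $nF_1+cw=0$. This forces $F_1=0$ and $c=0$ only because two values $n=8$ and $n=4$ are available; with a single $n$ one recovers precisely a multiple of $N-n$.
\item The $k=\pm1,\pm2,\pm3$ parts are handled similarly, the $k=-1$ part by passing to the adjoint on $\mathcal H_7,\mathcal H_3$.
\end{enumerate}
Without an argument of this kind you have only $M_3(L_3)\succeq0$. Your openness argument then has no interior point to start from, so neither assertion of the proposition is established.
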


\begin{proof}
We first show that $M_3(B_\eta)\succ0$. Let $\deg A\le3$ and suppose
that $B_\eta(A^\dagger A)=0$. Since $\rho_*$ is strictly positive on
the entire finite-dimensional shell $\ker(H-4)$, the operator $A$
annihilates that shell. Write
\[
A=\sum_{a+b\le3}u^ap_u^b\otimes B_{ab},
\qquad \deg B_{ab}\le3-a-b,
\]
where the $B_{ab}$ involve only the three oscillator modes.

For each $e\in\{-4,0\}$, the ten polynomials
$\phi^{-1}u^ap_u^b(P_e\phi)$, with $a+b\le3$, are linearly independent.
The four even members have distinct degrees $4,6,8,10$.
Among the six odd members, replacing the operator $p_u$ by
$p_u-\mathrm{i}u^3$ gives distinct degrees $3,5,7,9,11,13$,
since
\[
(p_u-\mathrm{i}u^3)(P_e\phi)=-\mathrm{i}P_e'\phi.
\]
Applying $A$ to $\psi_{-4}\otimes\mathcal H_8$ and
$\psi_0\otimes\mathcal H_4$ therefore shows that every $B_{ab}$
annihilates both $\mathcal H_8$ and $\mathcal H_4$.

No nonzero oscillator polynomial $B$ of degree at most three annihilates both
$\mathcal H_8$ and $\mathcal H_4$. To see this, use the Bargmann--Fock
representation, in which the creation and annihilation operators act as
multiplication by $w_j$ and as $\partial_{w_j}$, respectively, and
$\mathcal H_n$ consists of the homogeneous polynomials of degree $n$ in $w$.
Write $B=\sum_k B_k$, where $B_k$ collects the normally ordered terms that
raise the oscillator number by $k$. Since $B_k$ maps $\mathcal H_n$ into
$\mathcal H_{n+k}$, $B$ annihilates $\mathcal H_n$ only if every $B_k$ does.
Because $\deg B\le3$, the components with $k=\pm2,\pm3$ are pure creation or
pure annihilation polynomials: the former act by multiplication by a nonzero
polynomial and the latter by a constant-coefficient differential operator of
order two or three, so neither annihilates $\mathcal H_4$ unless it is zero.
The components with $k=0$ and $k=+1$ have the respective forms
\[
F_1(w)\cdot\partial+c,
\qquad
F_2(w)\cdot\partial+g_1(w),
\]
where $F_1,F_2$ are homogeneous vector polynomials of degrees one
and two, and $g_1$ is linear. Testing on $w_j^n$, for
$j=1,2,3$, gives respectively
\[
nF_1+cw=0,
\qquad
nF_2+g_1w=0.
\]
Using both $n=8$ and $n=4$ forces all these coefficients to vanish.
The component with $k=-1$ vanishes as well: since
$\langle B_{-1}\phi,\psi\rangle=\langle\phi,B_{-1}^\dagger\psi\rangle$ for
$\phi\in\mathcal H_n$ and $\psi\in\mathcal H_{n-1}$, it annihilates
$\mathcal H_8$ and $\mathcal H_4$ exactly when its adjoint, a $k=+1$ polynomial
of degree at most three, annihilates $\mathcal H_7$ and $\mathcal H_3$, and the
$k=+1$ argument with $n=7,3$ then gives $B_{-1}^\dagger=0$.
Hence every $B_{ab}$ is zero, and consequently $A=0$.

This proves $M_3(B_\eta)\succ0$. Since $\tau_3>0$ and
$M_3(T_3)\succeq0$,
\[
M_3(L_3)
=M_3(B_\eta)+\tau_3M_3(T_3)
\succ0.
\]
The construction also gives $L_3(R)=-3/20$.
Positive definiteness and this strict negative inequality persist
in a sufficiently small relative neighborhood satisfying the
linear constraints. Every functional in that neighborhood is
feasible and assigns a negative value to the nonnegative position
polynomial $R$, so its position moments admit no positive
representing measure.
\end{proof}

\subsection{Excluding the tail by finite constraints}

One localizing inequality excludes the functional in Theorem~\ref{thm:allK}. Since $R(x,y,z)\ge0$, every physical state satisfies
\begin{equation}\label{eq:tail-localizer}
L\bigl((z^{K-3})^\dagger R(x,y,z)z^{K-3}\bigr)\ge0.
\end{equation}
Its left-hand side is $L(f_K)$, whereas the constructed $L_K$ gives $-3/20$. A single entry of the localizing matrix for the fixed weight $R$ therefore removes this family at each level. The strengthened hierarchy still has nonphysical bounds on $u^2$ after any finite collection of these tests, by Theorem~\ref{thm:finite-strengthening}.

An exact shell relation removes the same tail. Set
\begin{equation}\label{eq:shell-polynomial}
F(N)=N(N-4)(N-8).
\end{equation}
It annihilates \eqref{eq:exact-shell}. Let $\mathcal F_K^{\mathrm{sh}}(4)$ impose, in addition to the standard constraints,
\begin{equation}\label{eq:shell-polynomial-constraints}
L(OF(N))=L(F(N)O)=0,\qquad\deg O\le2K-6.
\end{equation}
Taking $O=N^{K-3}$ gives
\begin{equation}\label{eq:tail-cut}
T_K(N^{K-3}F(N))=T_K(N^K)
=\frac{1}{2^Ka_K}\ell_K((1+X^2+Y^2)^K)>0.
\end{equation}
The lower powers of $N$ have degree below $2K$, while $(1+X^2+Y^2)^K$ is a nonzero sum of squares tested by $M_K(\ell_K)\succ0$. The formal functionals $\mathcal B_{e,t}$ satisfy all these shell relations, because their oscillator factors are supported at $N=4-e$. Hence the fixed-observable bounds remain strict in $\mathcal F_K^{\mathrm{sh}}(4)$.

\subsection{Convergence on the same shell}

For the potential \eqref{eq:main-potential}, set
\[
s=x^2+y^2+z^2,\qquad U=V_*+13.
\]
Then
\begin{equation}\label{eq:main-confinement-certificate}
U-\frac14(u^6+s)
=
\frac14(u^3-2u)^2
+\left(u^2-\frac{13}{4}\right)^2
+\frac{15}{16}
+\frac14s.
\end{equation}
Thus \eqref{eq:sos-potential} and \eqref{eq:sos-confinement} hold with
\begin{gather*}
v_p=v_q=-13,\qquad r=3,\qquad r_u=3,\qquad r_x=r_y=r_z=1,\\
\epsilon_j=\frac14\qquad(j=u,x,y,z).
\end{gather*}
The identity \eqref{eq:oneD-derivative}, with $U_u$ replaced by $U$, verifies the second-derivative bound in $u$, since \eqref{eq:main-confinement-certificate} gives $U-\frac14u^6\in\Sigma_6$. The higher $u$ derivatives are unchanged, so we may use the same coefficients:
\[
(a_{u1},b_{u1})=(40,0),\qquad
(a_{u2},b_{u2})=(0,0),\qquad
(a_{u3},b_{u3})=(0,360).
\]
For $j=x,y,z$, $\partial_j^2V_*=1$ and all higher even derivatives vanish; take $(a_{j1},b_{j1})=(0,1)$ and zero coefficients at higher orders. Together with essential self-adjointness, these certificates establish the hypotheses of Theorem~\ref{thm:reconstruction} at $E=4$.

Define the finite-order bounds
\begin{equation}\label{eq:u2-bootstrap-endpoints}
\underline b_K=\inf_{L\in\mathcal F_K(4)}L(u^2),\qquad
\overline b_K=\sup_{L\in\mathcal F_K(4)}L(u^2).
\end{equation}
Theorem~\ref{thm:finite-strengthening} gives, for every finite $K\ge3$,
\begin{equation}\label{eq:u2-finite-strict}
\underline b_K
<
a_0=\frac32\eta,
\qquad
\overline b_K
>
a_-=\frac{13\eta+6}{4+8\eta}.
\end{equation}
Restriction from level $K+1$ to level $K$ preserves feasibility, so $\underline b_K$ is nondecreasing and $\overline b_K$ is nonincreasing. Theorem~\ref{thm:reconstruction}, applied with $D=2$, now gives
\begin{equation}\label{eq:u2-limit}
\underline b_K\uparrow a_0,\qquad
\overline b_K\downarrow a_-.
\end{equation}
The same conclusion holds for the shell-strengthened hierarchy $\mathcal F_K^{\mathrm{sh}}(4)$.

\section{Discussion}\label{Sec:Discussion}

For the sextic oscillator, the bootstrap bounds on the fixed observable $u^2$ converge to its physical value, yet differ from it at every finite order. Its physical value has no finite polynomial-square certificate even modulo the full eigenstate annihilator. Finite exactness therefore requires more than explicit knowledge of the energy and eigenfunction. Additional algebraic structure can make a difference, as in harmonic-oscillator and shape-invariant bootstraps \cite{Aikawa2022,Aikawa2025}, while optimality conditions provide further constraints in noncommutative polynomial optimization \cite{Araujo2026}.

Both the four-mode construction at level three
(Proposition~\ref{prop:interior-tail}) and the companion example
\cite{KerenZhou2026Interior} give strictly positive definite spurious
moment matrices. The companion paper uses a pure eigenstate
background for a coupled three-dimensional double well at its
first excited energy, and also at its ground-state energy, with
nonphysicality certified by a nonnegative momentum sextic.
Here the background is a mixed state on an explicitly known
four-mode excited shell, and the witness is a nonnegative position
polynomial. The constraint conventions also differ. The companion
paper imposes stationarity rows generated by individual monomials
and verifies additional cancellation relations from multipliers
of degree at most eight. In \eqref{eq:finite-hierarchy}, the
stationarity condition includes arbitrary polynomial generators
whose reduced commutators lie in the retained moment space,
without a bound on their degree.

The four-mode example distinguishes this persistent obstruction from a removable one. Its Robinson tail produces position moments with no positive representing measure, but a single localizing inequality excludes the constructed point. The fixed-observable deformation survives such finite tests. Both constructions depend on the truncation order, consistently with convergence to the physical moment set.

The convergence theorem allows coupled polynomial potentials, although the derivative-domination certificates may be restrictive. Whether weaker hypotheses still yield uniform moment bounds remains open. The compactness proof also gives no convergence rate. In simplified bootstrap problems, rapid convergence is related to factorial moment growth and small moment-matrix eigenvalues \cite{Berenstein2025Goldilocks}. Relating these estimates to the finite deformations above may give quantitative bounds for fixed observables.

\appendix

\section{Numerical bounds}
\label{app:numerics}
Table~\ref{tab:bounds} reports numerical estimates of the infimum and supremum of
$L(u^2)$ over $\mathcal F_K^h(e)$ in \eqref{eq:finite-hierarchy}.
Eigenstate multipliers have degree at most $2K-6$; stationarity includes every
polynomial generator whose reduced commutator has degree at most $2K$.
Parity and time reversal preserve feasibility and $L(u^2)$, so we may restrict
to functionals vanishing on monomials of odd total degree and satisfying
$L(u^ip_u^j)=(-\mathrm i)^j r_{ij}$, $r_{ij}\in\mathbb R$.
In the basis $\mathrm i^ju^ip_u^j$, the moment matrix is real and splits into
even and odd blocks. All linear equalities are solved exactly over $\mathbb Q$,
leaving seven free variables at $K=3$ and eight at $K=4,5$.
The implementation's commutator cutoff was checked against the complete
weighted-degree bound $6K-2$, with weights one for $u$ and three for $p_u$;
both give the same reduced affine space in all nine cases.
The bound and the exact comparison are documented in the ancillary files.

These programs are poorly conditioned for double-precision solvers, whose
reported optima varied by several per cent with the scaling even at $K=3$.
We therefore used SDPA-GMP~\cite{Nakata2010} in 256-bit arithmetic.
The physical moments, evaluated at 60 decimal digits, satisfy the reduced
affine constraints with maximum residual below $2\times10^{-52}$.
For $K=3,4$, the absolute primal--dual objective differences are at most
$2.1\times10^{-4}$; all six runs at $K=5$ return \texttt{pdOPT}, with matching
objectives at the printed precision. These solver diagnostics are not rigorous
error bounds: the table gives numerical estimates, while the strict inequalities
and convergence in \eqref{eq:oneD-limits} are proved analytically.
The scripts, SDP inputs, solver outputs and independent checks are supplied
as ancillary files.

From $K=6$ on, $h_u-e$ lies in the kernel of every feasible moment matrix,
so the problem has no strictly feasible point. Removing this kernel exactly
(facial reduction) was not sufficient for the solver at $K=6$; we do not
report those levels.

\bibliography{reference}

\end{document}